\begin{document}

\mainmatter  % start of an individual contribution

\title{A Logic for Non-Deterministic Parallel Abstract State Machines\thanks{Work supported by the {\bf Austrian Science Fund (FWF: [P26452-N15])}. Project: \emph{Behavioural Theory and Logics for Distributed Adaptive Systems}. The final publication is available at Springer via \url{http://dx.doi.org/10.1007/978-3-319-30024-5_18}}}

\author{Flavio Ferrarotti\inst{1}
\and Klaus-Dieter Schewe\inst{1}  \and Loredana Tec\inst{1}  \and Qing Wang\inst{2}}
%

% the affiliations are given next; don't give your e-mail address
% unless you accept that it will be published
\institute{Software Competence Center Hagenberg, A-4232 Hagenberg, Austria\\
\mailsa\\
\and
Research School of Computer Science, The Australian National University\\
\mailsb\\
%\mailsc\\
%\url{http://www.springer.com/lncs}}
}
%
% NB: a more complex sample for affiliations and the mapping to the
% corresponding authors can be found in the file "llncs.dem"
% (search for the string "\mainmatter" where a contribution starts).
% "llncs.dem" accompanies the document class "llncs.cls".
%

\toctitle{Lecture Notes in Computer Science}
\tocauthor{Authors' Instructions}
\maketitle

\vspace{-0.4cm}
\begin{abstract}
We develop a logic which enables reasoning about single steps of non-deterministic parallel Abstract State Machines (ASMs). Our logic builds upon the unifying logic introduced by Nanchen and St\"ark for reasoning about hierarchical (parallel) ASMs. Our main contribution to this regard is the handling of non-determinism (both bounded and unbounded) within the logical formalism. Moreover, we do this without sacrificing the completeness of the logic for statements about single steps of non-deterministic parallel ASMs, such as invariants of rules, consistency conditions for rules, or step-by-step equivalence of rules.
\end{abstract}

\section{Introduction}
Gurevich's Abstract State Machines (ASMs) provide not only a formal theory of algorithms, but also are the basis for a general software engineering method based in the specification of higher-level ground models and step-by-step refinement. 
%This has lead to many success stories including the full formalization of several programming languages such as Prolog, Java and C\#, the formalization of business process models such as BPMN, the investigation of formal properties of stream queries, and the modelling of service oriented architectures, in particular Web services. 
Chapter 9 in the book~\cite{boerger:2003} gives a summary of many application projects that have developed complex systems solutions on the grounds of ASMs. A major advantage of the ASM method and a key for its success resides in the fact that it provides, not only a simple and precise framework to communicate and document design ideas, but also an accurate and checkable overall understanding of complex systems. In this context, formal verification of dynamic properties for given ASMs is a fundamentally important task, in particular in the case of modelling safety critical systems, where there is a need to ensure the integrity and reliability of the system. Clearly, a logical calculi appropriate for the formalisation and reasoning about dynamic properties of ASMs is an essential and valuable tool for this endeavour.

Numerous logics have been developed to deal with specific features of ASM verification such as correctness and deadlock-freeness (see Section~9.4.3 in the book \cite{boerger:2003}) for detailed references), but a complete logic for ASMs was only developed in~\cite{RobertLogicASM} by Nanchen and St\"ark. The logic formalizes properties of a single step of an ASM, which permits to define Hilbert-style proof theory and to show its completeness. In this work the treatment of non-determinism was deliberately left out. Same as parallelism, which is on the other hand captured by the logic for ASMs of Nanchen and St\"ark, non-determinism is also a prevalent concept in the design and implementation of software systems, and consequently a constitutive part of the ASM method for systems development~\cite{boerger:2003}.
Indeed, nondeterminism arises in the specification of many well known algorithms and software applications. Examples range from graph algorithms, such as minimum spanning tree and shortest path, to search techniques whose objective is to arrive at some admissible goal state (as in the n-queens and combinatorial-assignment problems \cite{Floyd67}), and learning strategies such as converging on some classifier that labels all data instances correctly \cite{Vapnik95}.  Non-deterministic behavior is also common in cutting edge fields of software systems. Distributed systems frequently need to address non-deterministic behaviour such as changing role (if possible) as strategic response to observed problems concerning load, input, throughput, etc. Also, many cyber-physical systems and hybrid systems such as railway transportation control systems  \cite{Alur2015} and  systems used in high-confidence medical healthcare devices exhibit highly non-deterministic behaviour.

Notice that although we could say that there is a kind of latent parallelism in non-determinism, they represent completely different behaviours and thus both are needed to faithfully model the behaviour of complex systems, more so in the case of the ASM method where the ability to model systems at every level of abstraction is one of its main defining features. For instance, while a nondeterministic action can evaluate to multiple behaviors, only if at least one of these behaviors does not conflict with concurrent tasks, then there is an admissible execution of the action in parallel with these tasks.  

The ASM method allows for two different, but complementary, approaches to non-determinism. The first approach assumes that choices are made by the environment via monitored functions that can be viewed as external oracles. In this case, non-deterministic ASMs are just interactive ASMs. The second approach assumes the ASMs themselves rather than the environment, to have the power of making non-deterministic choices. In this case the one-step transition function of the ASMs is no longer a function but a binary relation. This is also the approach followed by non-deterministic Turing machines. However, in the case of non-deterministic Turing machines the choice is always bounded by the transition relation. For ASMs the non-determinism can also be unbounded, i.e., we can choose among an infinite number of possibilities. Clearly, unbounded non-determinism should also be allowed if we want our ASMs to be able to faithfully model algorithms at any level of abstraction. 

In this work we develop a logic which enables reasoning about single steps of non-deterministic parallel ASMs, i.e., ASMs which include the well known \textbf{choose} and \textbf{forall} rules~\cite{boerger:2003}. This builds upon the complete logic introduced in the work of Nanchen and St\"ark~\cite{RobertLogicASM} for reasoning about single steps of hierarchical ASMs. Hierarchical ASMs capture the class of synchronous and deterministic parallel algorithms in the precise sense of the ASM thesis of Blass and Gurevich~\cite{blass:tocl2003,GurevichParallelCorrection08} (see also \cite{FerrarottiSTW15}). Our main contribution to this regard is the handling of non-determinism (both bounded and unbounded) within the logical formalism. More importantly, this is done without sacrificing the completeness of the logic. As highlighted by Nanchen and St\"ark~\cite{RobertLogicASM}, non-deterministic transitions manifest themselves as a difficult task in the logical formalisation for ASMs.

The paper is organized as follows. The next section introduces the required background from ASMs. Section~\ref{non-det-asms} formalises the model of non-deterministic parallel ASM used through this work. In Section~\ref{sec:TheLogic} we introduce the syntax and semantics of the proposed logic for non-deterministic parallel ASMs. Section~\ref{sec:proof_system} presents a detailed discussion regarding consistency and update sets, and the formalisation of a proof system. In Section~\ref{sec:Derivation} we use the proof system to derive some interesting properties of our logic, including known properties of the ASM logic in \cite{RobertLogicASM}. In Section~\ref{sec:completeness} we present our main result, namely that the proposed logic is complete for statements about single steps of non-deterministic parallel ASMs, such as invariants of rules, consistency conditions for rules, or step-by-step equivalence of rules. We conclude our work in Section~\ref{conclusions}.

\section{Preliminaries}

The concept of Abstract State Machines (ASMs) is well known~\cite{boerger:2003}. In its simplest form an ASM is a finite set of so-called \emph{transition rules} of the form \textbf{if} \emph{Condition} \textbf{then} \textit{Updates} \textbf{endif} which transforms abstract states. The condition or guard under which a rule is applied is an arbitrary first-order logic sentence. \emph{Updates} is a finite set of assignments of the form $f(t_1, \ldots, t_n) := t_0$ which are executed in parallel. The execution of $f(t_1, \ldots, t_n) := t_0$ in a given state proceeds as follows: first all parameters $t_0, t_1, \ldots t_n$ are evaluated to their values, say $a_0, a_1, \ldots, a_n$, then the value of $f(a_1, \ldots, a_n)$ is updated to $a_0$, which represents the value of $f(a_1, \ldots, a_n)$ in the next state. Such pairs of a function name $f$, which is fixed by the signature, and optional argument $(a_1, \ldots, a_n)$ of dynamic parameters values $a_i$, are called \emph{locations}. They represent the abstract ASM concept of memory units which abstracts from particular memory addressing. Location value pairs $(\ell,a)$, where $\ell$ is a location and $a$ a value, are called \emph{updates} and represent the basic units of state change.       

The notion of ASM \emph{state} is the classical notion of \emph{first-order structure} in mathematical logic. For the evaluation of first-order terms and formulae in an ASM state, the standard interpretation of function symbols by the corresponding functions in that state is used. As usually in this setting and w.l.o.g., we treat predicates as characteristic functions and constants as $0$-ary functions.    

The notion of the ASM \emph{run} is an instance of the classical notion of the computation of transition systems. An ASM computation step in a given state consists in executing \emph{simultaneously} all updates of all transition rules whose guard is true in the state, if these updates are consistent, in which case the result of their execution yields a next state. In the case of inconsistency, the computation does not yield a next state. A set of updates is \emph{consistent} if it contains no pairs $(\ell, a)$, $(\ell, b)$ of updates to a same location $\ell$ with $a \neq b$.  

Simultaneous execution, as obtained in one step through the execution of a set of updates, provides a useful instrument for high-level design to locally describe a global state change. This synchronous parallelism is further enhanced by the transition rule \textbf{forall} $x$ \textbf{with} $\varphi$ \textbf{do} $r$ \textbf{enddo} which expresses the simultaneous execution of a rule $r$ for each $x$ satisfying a given condition $\varphi$. 

Similarly, non-determinism as a convenient way of abstracting from details of scheduling of rule executions can be expressed by the rule \textbf{choose} $x$ \textbf{with} $\varphi$ \textbf{do} $r$ \textbf{enddo}, which means that $r$ should be executed with an arbitrary $x$ chosen among those satisfying the property $\varphi$. 

The following example borrowed from~\cite{boerger:2003} clearly illustrates the power of the \textbf{choose} and \textbf{forall} rules. 

\begin{example}\label{ex-diff-words}
The following ASM generates all and only the pairs $vw \in A^*$ of different words $v, w$ of same length (i.e., $v \neq w$ and $|v|=|w|$).\\[-0.8cm]
\begin{alltt}
\small
choose \(n,i\) with \(i<n\) do
   choose \(a,b\) with \({{a}\in{A}}\wedge{{b}\in{A}}\wedge{{a}\neq{b}}\) do
      \({v(i)}:={a}\)
      \({w(i)}:={b}\)
      forall \(j\) with \({j<n}\wedge{{j}\neq{i}}\) do
         choose \(a,b\) with \({{a}\in{A}}\wedge{{b}\in{A}}\) do
            \({v(j)}:={a}\)
            \({w(j)}:={b}\)
         enddo
      enddo
   enddo
enddo
\end{alltt}
When all possible choices are realized, the set of reachable states of this ASM is the set of all ``$vw$'' states with $v \neq w$ and $|v| = |w|$. 
\end{example}

\section{Non-Deterministic Parallel ASMs}\label{non-det-asms}

It is key for the completeness of our logic to make sure that the ASMs do not produce infinite update sets. For that we formally define ASM states as simple metafinite structures~\cite{graedel:infcomp1998} instead of classical first-order structures, and restrict the variables in the \textbf{forall} rules to range over the finite part of such metafinite states. Nevertheless, the class of algorithms that are captured by these ASM machines coincides with the class of parallel algorithms that satisfy the postulates of the parallel ASM thesis of Blass and Gurevich~\cite{blass:tocl2003,GurevichParallelCorrection08} (see \cite{FerrarottiSTW15} for details). 

A \emph{metafinite structure} $S$ consists of: a finite first-order structure $S_1$ --the \emph{primary part} of $S$; a possibly infinite first-order structure $S_2$ --the \emph{secondary part} of $S$; and a finite set of functions which map elements of $S_1$ to elements of $S_2$ --the \emph{bridge functions}. A signature $\Upsilon$ of metafinite structures comprises a sub-signature $\Upsilon_1$ for the primary part, a sub-signature $\Upsilon_2$ for the secondary part and a finite set
$\mathcal{F}_b$ of bridge function names. The \emph{base set} of a
state $S$ is a nonempty set of values $B=B_1\cup B_2$, where
$B_1$ is the finite domain of $S_1$, and $B_2$ is the possibly infinite domain of $S_2$. Function
symbols $f$ in $\Upsilon_1$ and $\Upsilon_2$ are
interpreted as functions $f^S$ over $B_1$ and $B_2$, respectively. The
 interpretation of a n-ary function symbol $f\in\mathcal{F}_b$
defines a function $f^S$ from $B^{n}_1$ to $B_2$.
As usual, we distinguish between \emph{updatable} dynamic functions and static
functions.

%Let $S$ be a state over $\Upsilon$, $f\in\Upsilon$ be a dynamic function symbol of arity $n$ and $a_1,...,a_n$ be elements in $B_1$ %or $B_2$ depending on whether $f \in \Upsilon_{1} \cup {\cal F}_b$ or $f \in \Upsilon_2$, respectively.
%Then $(f,(a_1,...,a_n))$ is called a {\em location} of $S$. An
%\emph{update} of $S$ is a pair $(\ell,b)$, where $\ell$ is a
%location and $b$ (which belong to $B_1$ or $B_2$ depending on whether $f \in \Upsilon_1$ or $f \in \Upsilon_2 \cup {\cal F}_b$, resp%ectively) is the \emph{update value} of $\ell$. To simplify notation we write $(f
%,(a_1,\dots,a_n),b)$ for the update $(\ell,b)$ with the location $\ell = (f,(a_1,\dots,a_n))$. The
%interpretation of $\ell$ in $S$ is called the \emph{content} of
%$\ell$ in $S$, denoted by $val_{S}(\ell)$. An \emph{update set}
%$\Delta$ is a set of updates.

Let $\Upsilon=\Upsilon_1\cup\Upsilon_2\cup\mathcal{F}_b$ be a signature of metafinite states. Fix a countable set $\mathcal{X} = \mathcal{X}_{1} \cup \mathcal{X}_2$
of first-order variables. Variables in $\mathcal{X}_{1}$, denoted with standard lowercase letters $x, y, z , \ldots$, range over the primary part of a meta-finite state (i.e., the finite set $B_{1}$), whereas variables in $\mathcal{X}_2$, denoted with typewriter-style lowercase letters $\texttt{x}, \texttt{y}, \texttt{z}, \ldots$, range over $B_2$.
The set of first-order terms ${\cal T}_{\Upsilon, {\cal X}}$ of vocabulary $\Upsilon$ is defined in a similar way than in meta-finite model theory \cite{graedel:infcomp1998}. That is, ${\cal T}_{\Upsilon, {\cal X}}$ is constituted by the set $\mathcal{T}_{p}$ of \emph{point terms} and the set $\mathcal{T}_{a}$ of \emph{algorithmic terms}. The set of point terms $\mathcal{T}_{p}$ is the closure of the set ${\cal X}_1$ of variables under the application of function symbols in $\Upsilon_1$. The set of algorithmic terms $\mathcal{T}_{a}$ is defined inductively: Every variable in ${\cal X}_2$ is an algorithmic term in $\mathcal{T}_{a}$; If $t_1, \ldots, t_n$ are point terms in $\mathcal{T}_{p}$ and $f$ is an $n$-ary bridge function symbol in $\mathcal{F}_b$, then $f(t_1, \ldots, t_n)$ is an algorithmic term in $\mathcal{T}_{a}$; if $t_1, \ldots, t_n$ are algorithmic terms in $\mathcal{T}_{a}$ and $f$ is an $n$-ary function symbol in $\Upsilon_2$, then $f(t_1, \ldots, t_n)$ is an algorithmic term in $\mathcal{T}_{a}$; nothing else is an algorithmic term in $\mathcal{T}_{b}$.

%Let $S$ be a meta finite state of signature $\Upsilon$. A \emph{valuation} or \emph{variable assignment} $\zeta$ is a function that assigns to every variable in ${\cal X}_1$ a value in the base set $B_1$ of the primary part of $S$ and to every variable in ${\cal X}_2$ a value in the base set $B_2$ of the secondary part of $S$. The value $\mathit{val}_{S, \zeta}(t)$ of a term $t$ in the state $S$ under the valuation $\zeta$ is defined as usual in first-order logic. We distinguish among two classes of first-order formulae over metafinite structures (states). The first class, denoted $\mathrm{FO}$ is the class of first-order formulae with equality which is built up from equations between terms in ${\cal T}_{\Upsilon, {\cal X}}$ by using the standard connectives and first-order quantifiers. The second class, denoted $\mathrm{FO}^*$ is the class of first-order formulae with equality which is built up from equations between terms in ${\cal T}_{\Upsilon, {\cal X}_1}$, again by using the standard connectives and first-order quantifiers. The semantics of both classes of first-order formulae is defined in the standard way. The truth value of a first-order formula $\varphi$ in $S$ under the valuation $\zeta$ is denoted as $[\![\varphi]\!]_{S,\zeta}$.

Let $S$ be a meta finite state of signature $\Upsilon$. A \emph{valuation} or \emph{variable assignment} $\zeta$ is a function that assigns to every variable in ${\cal X}_1$ a value in the base set $B_1$ of the primary part of $S$ and to every variable in ${\cal X}_2$ a value in the base set $B_2$ of the secondary part of $S$. The value $\mathit{val}_{S, \zeta}(t)$ of a term $t \in {\cal T}_{\Upsilon, {\cal X}}$ in the state $S$ under the valuation $\zeta$ is defined as usual in first-order logic. The \emph{first-order logic of metafinite structures} (states) is defined as the first-order logic with equality which is built up from equations between terms in ${\cal T}_{\Upsilon, {\cal X}}$ by using the standard connectives and first-order quantifiers. Its semantics is defined in the standard way. The truth value of a first-order formula of meta finite structures $\varphi$ in $S$ under the valuation $\zeta$ is denoted as $[\![\varphi]\!]_{S,\zeta}$.

In our definition of ASM rule, we use the fact that function arguments can be read as tuples. Thus, if $f$ is an $n$-ary function and $t_1, \ldots, t_n$ are arguments for $f$, we write $f(t)$ where $t$ is a term which evaluates to the tuple $(t_1, \ldots, t_n)$, instead of $f(t_1, \ldots, t_n)$. This is not strictly necessary, but it greatly simplifies the presentation of the technical details in this paper.
Let $t$ and $s$ denote terms in ${\cal T}_p$, let $\mathtt{t}$ and $\mathtt{s}$ denote terms in ${\cal T}_a$ and let $\varphi$ denote a first-order formula of metafinite structures of vocabulary $\Upsilon$. The set of {\em ASM rules} over $\Upsilon$ is inductively defined as follows:

\begin{itemize}

\item {\em update rule $1$}: $f(t) := s$ (where $f \in \Upsilon_1$);

\item {\em update rule $2$}: $f(\mathtt{t}) := \mathtt{s}$ (where $f \in \Upsilon_2$);

\item {\em update rule $3$}: $f(t) := \mathtt{s}$ (where $f \in {\cal F}_b$); 

\item {\em conditional rule}: \textbf{if} $\varphi$ \textbf{then} $r$ \textbf{endif}

\item {\em forall rule}: \textbf{forall} $x$ \textbf{with} $\varphi$ \textbf{do} $r$ \textbf{enddo}

\item {\em bounded choice rule}: \textbf{choose} $x$ \textbf{with} $\varphi$ \textbf{do} $r$ \textbf{enddo}

\item {\em unbounded choice rule}: \textbf{choose} $\mathtt{x}$ \textbf{with} $\varphi$ \textbf{do} $r$ \textbf{enddo}

\item {\em parallel rule}: \textbf{par} $r_1$ $r_2$ \textbf{endpar} (execute the rules $r_1$ and $r_2$ in parallel);

\item {\em sequence rule}: \textbf{seq} $r_1$ $r_2$ \textbf{endseq} (first execute rule $r_1$ and then rule $r_2$).

%\item {\em let rule}: assign the location operator $\rho$ to the location $(f,t)$ and then aggregate all update values of $l$ yielded by the rule $r$;

%\hspace{3cm}\textbf{let} $(f,t) \!\rightharpoonup\! \rho$ \textbf{in}
%$r$ \textbf{endlet}

\end{itemize}

If $r$ is an ASM rule of signature $\Upsilon$ and $S$ is a state of $\Upsilon$, then we associate to them a set $\Delta(r,S,\zeta)$ of update sets which depends on the variable assignment $\zeta$.
Let $\zeta[x \mapsto a]$ denote the variable assignment which coincides with $\zeta$ except that it assigns the value $a$ to $x$. We formally define in Figure~\ref{fig:set} the sets of update sets yielded by the ASM rules. Items~$1$--$3$ in Figure~\ref{fig:set} correspond to the update rules~$1$--$3$, respectively. Each update rules yields a set which contains a single update set, which in turns contains a single update to a function of $S$. Depending on whether the function name $f$ belongs to $\Upsilon_1$, $\Upsilon_2$ or ${\cal F}_b$, the produced update corresponds to a function in the primary or secondary part of $S$ or to a bridge function, respectively. 
The choice rules introduce non-determinism. The bounded choice rule yields a finite set of update sets, since $x$ range over the (finite) primary part of $S$ (see item~6 in Figure~\ref{fig:set}). The unbounded choice rule yields a possibly infinite set of update sets (see item~7 in Figure~\ref{fig:set}). In this latter case, $\mathtt{x}$ range over the (possible infinite) secondary part of $S$ and it might happen that there are infinite valuations for $\mathtt{x}$ that satisfy the condition $\varphi$, each resulting in a different update set.   
All other rules only rearrange updates into different update sets. Update sets are explained in more detail in Section~\ref{sub:UpdateSets}. 

\begin{figure}[htb]
\fbox{\parbox{11.9cm}{
\begin{enumerate}

\item  $\Delta(f(t) := s,S,\zeta) = \{ \{  (f, (a),b) \} \}$ for $a = val_{S,\zeta}(t) \in B_1$ and $b = val_{S,\zeta}(s) \in B_1$\\

\item  $\Delta(f(\mathtt{t}) := \mathtt{s},S,\zeta) = \{ \{  (f, (a),b) \} \}$ for $a = val_{S,\zeta}(\mathtt{t}) \in B_2$ and $b = val_{S,\zeta}(\mathtt{s}) \in B_2$\\

\item  $\Delta(f(t) := \mathtt{s},S,\zeta) = \{ \{  (f, (a),b) \} \}$ for $a = val_{S,\zeta}(t) \in B_1$ and $b = val_{S,\zeta}(\mathtt{s}) \in B_2$\\

\item
$\Delta(\text{\textbf{if} }\varphi\text{ \textbf{then} } r\text{
\textbf{endif}},S,\zeta) = \begin{cases} \Delta(r,S,\zeta) &\text{if
}  [\!\![\varphi]\!\!]_{S,\zeta} = \mathrm{true}
\\ \{\emptyset\} &\text{otherwise}
\end{cases}$\\

\item
$\Delta(\text{\textbf{forall} } x \text{ \textbf{with} }\varphi \text{\textbf{ do} }r\text{ \textbf{enddo}},S,\zeta) \!=$ \\
\hspace*{3cm} $\{ \Delta_1 \cup \dots \cup \Delta_n \mid \Delta_i \in \Delta(r,S,\zeta[x \mapsto a_i]) \}$, \\
\hspace*{3cm} where $\{a_1 ,\dots, a_n \} = \{a_i \in B_1 \mid [\!\![\varphi]\!\!]_{S,\zeta[x \mapsto a_i]} = \mathit{true} \}$\\

\item
$\Delta(\text{\textbf{choose} } x \text{ \textbf{with} }\varphi \text{\textbf{ do} }r\text{ \textbf{enddo}},S,\zeta) =$\\
\hspace*{4cm} $\bigcup\limits_{a_i \in B_1}\{ \Delta(r,S,\zeta[x \mapsto a_i]) \mid [\!\![\varphi]\!\!]_{S,\zeta[x \mapsto a_i]} = \mathrm{true} \}$ \\

\item
$\Delta(\text{\textbf{choose} } \mathtt{x} \text{ \textbf{with} }\varphi \text{\textbf{ do} }r\text{ \textbf{enddo}},S,\zeta) =$\\
\hspace*{4cm} $\bigcup\limits_{a_i \in B_2}\{ \Delta(r,S,\zeta[x \mapsto a_i]) \mid [\!\![\varphi]\!\!]_{S,\zeta[x \mapsto a_i]} = \mathrm{true} \}$ \\

\item
$\Delta(\text{\textbf{par} }r_1 \text{ } r_2\text{ \textbf{endpar}},S,\zeta) =$\\
\hspace*{4.2cm} $\{ \Delta_1 \cup \Delta_2 \mid \Delta_1 \in \Delta(r_1,S,\zeta) \; \text{and} \; \Delta_2 \in \Delta(r_2,S,\zeta) \}$ \\

\item
$\Delta(\text{\textbf{seq} }r_1 \text{ } r_2 \text{
\textbf{endseq}},S,\zeta) =$ \\
\hspace*{1cm} $\{ \Delta_1 \oslash \Delta_2 \mid
\Delta_1 \in \Delta(r_1,S,\zeta) \;\text{is consistent and }\Delta_2 \in
\Delta(r_2,S+\Delta_1,\zeta)\} \cup$ \\
\hspace*{1cm} $\{ \Delta_1 \in \Delta(r_1,S,\zeta) \mid \Delta_1 \;\text{is inconsistent} \}$,\\
\hspace*{1cm} where $\Delta_1 \oslash \Delta_2 = \Delta_2 \cup \{
(\ell,a) \in \Delta_1 \mid  \ell \neq \ell^\prime \text{ for all }(\ell',a^\prime) \in \Delta_2\}$

%\item
%$\Delta(\text{\textbf{let} } (f,t) \!\rightharpoonup\!\rho \text{
%\textbf{in} }r\text{ \textbf{endlet}},S,\zeta) =$\\
%$\{\{ (\ell,a)
%\mid a = \rho( \{\!\!\{ a_i \mid (\ell,a_i) \in
%\ddot{\Delta}\}\!\!\} ) \} \cup \{ (\ell^\prime,b) \in \ddot{\Delta} \mid \ell^\prime \neq \ell \} \mid \ddot{\Delta}\in \ddot{\Delta}(r,S,\zeta%)\}$
%\hspace{6.3cm}where $\ell = (f, val_{S,\zeta}(t))$

\end{enumerate}}}\caption{Sets of update sets of non-deterministic parallel ASMs}\label{fig:set}
\end{figure}

\begin{remark}
For every state $S$, ASM rule $r$ and variable assignment $\zeta$, we have that every $\Delta \in \Delta(r,S,\zeta)$ is a finite set of updates. This is a straightforward consequence of the fact that the variable $x$ in the definition of the $\textbf{forall}$ rule ranges over the (finite) primary part of $S$, and it is also the case in the ASM thesis for parallel algorithms of Blass and Gurevich~\cite{blass:tocl2003,GurevichParallelCorrection08} where it is implicitly assumed that the $\textbf{forall}$ rule in the parallel ASMs range over finite hereditary multisets. See our work in~\cite{FerrarottiSTW15} for a detailed explanation. 
Regarding the set $\Delta(r,S,\zeta)$ of update sets, we note that it might be infinite since the unbounded choice rule can potentially produce infinitely many update sets. In fact, this is the case if we consider the first unbounded choice rule in Example~\ref{ex-diff-words}.  
\end{remark}

Formally, a {\em non-deterministic parallel} ASM $M$ over a signature $\Upsilon$ of metafinite states consists of: (a) a set $\mathcal{S}$ of metafinite states over $\Upsilon$, (b) non-empty subsets $\mathcal{S}_I \subseteq \mathcal{S}$ of \emph{initial states} and $\mathcal{S}_F \subseteq \mathcal{S}$ of \emph{final states}, and (c) a \emph{closed} ASM rule $r$ over $\Upsilon$, i.e., a rule $r$ in which all free variables in the first-order formulae of the rule are bounded by $\textbf{forall}$ or $\textbf{choose}$ constructs.

Every non-deterministic parallel ASM $M$ defines a corresponding \emph{successor relation} $\delta$ over $\mathcal{S}$ which is determined by the main rule $r$ of $M$. A pair of states $(S_1, S_2)$ belongs to $\delta$ iff there is a consistent update set $\Delta \in \Delta(r,S)$ (the valuation $\zeta$ is omitted from $\Delta(r,S,\zeta)$ since $r$ is closed) such that $S_2$ is the unique state resulting from updating $S_1$ with $\Delta$.
A {\em run} of an ASM $M$ is a finite sequence $S_0 ,\dots, S_n$ of states with
$S_0 \in \mathcal{S}_I$, $S_n \in \mathcal{S}_F$, $S_i \notin
\mathcal{S}_F$ for $0 < i < n$, and $(S_i,S_{i+1}) \in \delta$ for
all $i=0,\dots,n-1$.

The following example, adapted from~\cite{Huggins02}, illustrates a parallel ASMs with bounded non-determinism.  

\begin{example}\label{example1}
We consider metafinite states with: (a) a primary part formed by a connected weighted graph $G=(V,E)$, (b) a secondary part formed by the set of natural numbers $\mathbb{N}$, and (c) a bridge function $\mathit{weight}$ from the set of edges in $E$ to $\mathbb{N}$. Apart from the static (Boolean) function symbols $V$ and $E$, the vocabulary of the primary part of the states also includes dynamic function symbols $\mathit{label}$ and $T$, and  static function symbols $\mathit{first}$ and $\mathit{second}$, the last two for extracting the first and second element of an ordered pair, respectively. Since $G$ is an undirected graph, we have that $(x,y) \in E$ iff $(y,x) \in E$.

The non-deterministic parallel ASM in this example, which we denote as $M$, formally expresses Kruskal's algorithm~\cite{Kruskal56} for computing the \emph{minimum spanning tree} in a connected, weighted graph. Recall that a spanning tree $T$ of a graph $G$ is a tree such that every pair of nodes in $G$ are connected via edges in $T$. We say that $T$ is minimum if the sum of the weights of all its edges is the least among all spanning trees of $G$. We assume that in every initial state of $M$, $\mathit{label}(x) = x$ for every $x \in V$ and that $T((x,y)) = \mathit{false}$ for every $(x,y) \in E$.

The condition in the first \textbf{choose} rule is simply ensuring that the chosen edge $x$ is eligible, i.e., that the nodes $\mathit{first}(x)$ and $\mathit{second}(x)$ that make up the endpoints of the edge $x$ have different labels, and that $x$ has minimal weight among the set of eligible edges. The following two update rules simply add the edge $x$ to the tree $T$. The second \textbf{choose} rule reflects the fact that from the point of view of the correctness of the algorithm, it does not matter which endpoint $y$ of the edge $x$ we choose at this stage. Finally, the \textbf{forall} rule simply relabels (as expected) every node with the same label than the endpoint $y$ of $x$ (including the node $y$ itself) with the label of the opposite endpoint of $x$. \\[-0.8cm]

\begin{alltt}
\small
choose \(x\) with \(E(x)\land\mathit{label}(\mathit{first}(x))\neq\mathit{label}(\mathit{second}(x))\land\)
         \(\forall\)\(y (E(y)\land\mathit{label}(\mathit{first}(y))\neq\mathit{label}(\mathit{second}(y))\rightarrow\mathit{weight}(y)\ge\mathit{weight}(x))\) do
  \(\mathit{T}(x):=\mathit{true}\)
  \(\mathit{T}((\mathit{second}(x),\mathit{first}(x))):=\mathit{true}\)
  choose \(y\) with \(y=\mathit{first}(x)\lor{y}=\mathit{second}(x)\) do
      forall \(z\) with \(\mathit{label}(z)=\mathit{label}(y)\) do
          if \(\mathit{label}(y)=\mathit{label}(\mathit{first}(x))\) then \(\mathit{label}(z):=\mathit{label}(\mathit{second}(x))\) endif
          if \(\mathit{label}(y)=\mathit{label}(\mathit{second}(x))\) then \(\mathit{label}(z):=\mathit{label}(\mathit{first}(x))\) endif
      enddo
  enddo
enddo
\end{alltt}
\end{example}

%%%%%%%%%%%%%%%%%%%%%%%%%%%%%%%%%%%%%%%%%%%%%%%%%%%%%%%%%%%%%%%
\section{A Logic for Non-Deterministic Parallel ASMs}
\label{sec:TheLogic}

The logic for non-deterministic parallel ASMs (denoted ${\cal L}$) is a dynamic first-order logic extended with membership predicates over finite sets, an update set predicate and a multi-modal operator. ${\cal L}$ is defined over many sorted first-order structures which have:
\begin{itemize}
\item a \emph{finite individual sort} with variables $x_1,x_2,...$ which range over a finite domain $D_1$,
\item an \emph{individual sort} with variables $\mathtt{x}_1, \mathtt{x}_2, \ldots$, which range over a (possibly infinite) domain $D_2$, and
\item a \emph{predicate sort} with variables $x^1_1, x^1_2, \ldots,$ which range over the domain $P_1$ formed by all finite subsets (relations) on
$\mathcal{F}_{dyn} \times (D_1 \cup D_2) \times (D_1 \cup D_2)$.
\item a \emph{predicate sort} with variables $x^2_1, x^2_2, \ldots,$ which range over the domain $P_2$ formed by all finite subsets (relations) on $\mathcal{F}_{dyn} \times (D_1 \cup D_2) \times (D_1 \cup D_2) \times D_1$.
\end{itemize}
A signature $\Sigma$ of the logic ${\cal L}$ comprises a finite set $F_1$ of names for functions on $D_1$, a finite set $F_2$ of names for functions on $D_2$, and a finite set $F_b$ of names for functions which take arguments from $D_1$ and return values on $D_2$.

We define terms of ${\cal L}$  by induction. Variables $x_1,x_2,...$ and $\mathtt{x}_1, \mathtt{x}_2, \ldots$ are terms of the first and second individual sort, respectively. Variables $x^1_1, x^1_2, \ldots$ and $x^2_1, x^2_2, \ldots$ are terms of the first and second predicate sort, respectively. If $f$ is an $n$-ary function name in $F_1$ and $t_1, \ldots, t_n$ are terms of the first individual sort, then $f(t_1, \ldots, t_n)$ is a term of the first individual sort. If $f$ is an $n$-ary function name in $F_2$ and $t_1, \ldots, t_n$ are terms of the second individual sort, then $f(t_1, \ldots, t_n)$ is a term of the second individual sort. If $f$ is an $n$-ary function name in $F_b$ and $t_1, \ldots, t_n$ are terms of the first individual sort, then $f(t_1, \ldots, t_n)$ is a term of the second individual sort.

The formulae of ${\cal L}$ are those generated by the following grammar:
\[\begin{aligned}
\varphi, \psi \, ::= & \, s=t \mid s_a = t_a \mid  \neg\varphi \mid \varphi \wedge \psi \mid \forall x (\varphi) \mid \forall \texttt{x} (\varphi) \mid \forall x^1 (\varphi) \mid \forall x^2 (\varphi) \mid \\
&\in^1\!\!(x^1\!,f,t_0,s_0) \mid \in^2\!\!(x^2\!,f,t_0,s_0,s) \mid \mathrm{upd}(r, x^1) \mid [x^1]\varphi
\end{aligned}\]
where $s$ and $t$ denote terms of the first individual sort, $s_a$ and $t_a$ denote terms of the second individual sort, $f$ is a dynamic function symbol, $r$ is an ASM rule and, $t_0$ and $s_0$ denote terms of either the first or the second individual sort.

The interpretation of terms and the semantics of the first-order formulae is defined in the standard way. This includes equality which is used under a fixed interpretation and only between terms of a same individual sort.

The update set predicate $\mathrm{upd}(r, x^1)$ states that the \emph{finite} update set represented by $x^1$ is generated by the rule $r$. Let $S$ be a state of some signature $\Sigma$ of the logic ${\cal L}$. Let $\zeta$ be a variable assignment over $S$ which maps each variable of the first and second individual sort to a value in $D_1$ and $D_2$, respectively, and maps each variable of the first and second predicate sort to a value in $P_1$ and $P_2$, respectively. The truth value of $\mathrm{upd}(r, x^1)$ is defined by $[\![\mathrm{upd}(r, x^1)]\!]_{S,\zeta} = \mathit{true}$ iff $\mathit{val}_{S,\zeta}(x^1) \in \Delta(r,S,\zeta)$.

The set membership predicate $\in^1\!\!(x^1\!,f,t_0,s_0)$ indicates that $(f,t_0,s_0)$ is an update in the update set represented by $x^1$ while the auxiliary set membership predicate $\in^2\!\!(x^2\!,f,t_0,s_0,s)$ is used to keep track of which parallel branch produced each update in $x^2$. Their truth values are formally defined as follows:\\
$[\![\in^1\!\!(x^1\!,f,t_0,s_0)]\!]_{S,\zeta}=\mathit{true} \; \text{iff} \; (f, \mathit{val}_{S,\zeta}(t_0), \mathit{val}_{S,\zeta}(s_0)) \in \mathit{val}_{S,\zeta}(x^1)$\\
$[\![\in^2\!\!(x^2\!,f,t_0,s_0,s)]\!]_{S,\zeta}=\mathit{true} \; \text{iff} (f, \mathit{val}_{S,\zeta}(t_0), \mathit{val}_{S,\zeta}(s_0), \mathit{val}_{S,\zeta}(s)) \in \mathit{val}_{S,\zeta}(x^2)$

Finally, we use $[x^1]\varphi$ to express the evaluation of $\varphi$ over the successor state obtained by applying the updates in $x^1$ to the current state. Its truth value is defined by: $[\![[x^1]\varphi]\!]_{S,\zeta}=\mathit{true}$ iff $\Delta = \zeta(x^1)$ is inconsistent or $[\![\varphi]\!]_{S+\Delta,\zeta}=\mathit{true}$ for $\zeta(x^1) = \Delta \in \Delta(r,S,\zeta)$. That is, when $\Delta = \zeta(x^1)$ is inconsistent, successor states for the current state $S$ do not exist and thus $S+\Delta$ is undefined. In this case, $[x^1]\varphi$ is interpreted as $\mathit{true}$. With the use of the modal operator [\hspace{0.1cm}] for an update set $\Delta=\zeta(x^1)$ (i.e., $[x^1]$), ${\cal L}$ is empowered to be a multi-modal logic. 
%The formulae of ${\cal L}$ are interpreted in states represented as a Kripke frame. A \emph{Kripke frame} is a pair $(U,R)$ consisting of a universe $U$ that is a
%non-empty set of states, and a binary accessibility relation $R$ on $U$ such that $(S,S^{\prime})\in R$ for $S,S^{\prime}\in U$.

We say that a formula $\varphi$  of ${\cal L}$ is \emph{static} if all the function symbols which appear in $\varphi$ are static and say that it is \emph{pure} if it is generated by the following grammar: $\varphi, \psi \, ::= \, s=t \mid s_a = t_a \mid  \neg\varphi \mid \varphi \wedge \psi \mid \forall x (\varphi) \mid \forall \texttt{x} (\varphi)$.

Since metafinite states are just a special kind of two sorted first-order structures in which one of the sorts is finite, we can identify every metafinite state $S$ of ${\cal L}$ with a corresponding many sorted first-order structure $S'$ of the class used in definition of ${\cal L}$. This can be done by taking the domains $D_1$ and $D_2$ of the individual sorts of $S'$ to be the base sets $B_1$ and $B_2$ of $S$, respectively, the sets $F_1$, $F_2$ and $F_b$ of function names of the signature $\Sigma$ of $S'$ to be the sets $\Upsilon_{1}$, $\Upsilon_2$ and ${\cal F}_b$ of the signature $\Upsilon$ of $S$, respectively, and the interpretation in $S'$ of the function names in $\Sigma$ to coincide with the interpretation in $S$ of the corresponding function symbols in $\Upsilon$. Following this transformation we have that for every state $S$, every corresponding pair of many sorted first-order structure $S'$ and $S''$ are isomorphic by an isomorphism which is the identity among elements of the individual sorts. Thus, we can talk of \emph{the} many sorted structure $S$ corresponding to a state $S$ and, when it is clear from the context, we can even talk of the state $S$ meaning the many sorted structure $S$.

In what follows, we use the somehow clearer and more usual syntax of second-order logic to denote the set membership predicates and the quantification over the predicate sorts. Thus we use upper case letters $X, Y, \ldots$ and ${\cal X}, {\cal Y}, \ldots$ to denote variables $x^1_1, x^1_2, \ldots$ and $x^2_1, x^2_2, \ldots$ of the first and second predicate sorts, respectively, and we write $\forall X(\varphi)$, $\forall {\cal X}(\varphi)$, $[X]\varphi$, $X(f,t_0,s_0)$, ${\cal X}(f,t_0,s_0,s)$ and $\mathrm{upd}(r, X)$ instead of $\forall x^1\,(\varphi)$, $\forall x^2\,(\varphi)$, $[x^1]\varphi$, $\in^1\!\!(x^1\!,f,t_0,s_0)$, $\in^1\!\!(x^1\!,f,t_0,s_0,s)$ and $\mathrm{upd}(r, x^1)$, respectively.
Furthermore, in our formulae we use disjunction $\vee$, implication $\rightarrow$, double implication $\leftrightarrow$ and existential quantification $\exists$. All of them are defined as abbreviations in the usual way.

\begin{example}
${\cal L}$ can express properties of the ASM in Example~\ref{example1} such as:
\begin{itemize}
\item If $r$ yields in the current state $S$ an update set $\Delta$ with an update $(T, x, \textit{true})$, then in the successor state $S+\Delta$ the vertices of $x$ have a same label.\\[0.2cm]
\hspace*{-0.4cm} $\forall X (\mathrm{upd}(r,X) \rightarrow \forall x (X(T,x,\mathit{true}) \rightarrow [X](\mathit{label}(\mathit{first}(x)) = \mathit{label}(\mathit{second}(x)))))$\\
\item Each update set yielded by $r$ updates $T$ in no more than one location. \\[0.2cm]
\hspace*{-0.4cm} $\forall X (\mathrm{upd}(r,X) \rightarrow \neg(\exists x y (X(T,x,\mathit{true}) \wedge X(T,y,\mathit{true}) \wedge x \neq y)))$\\
\item If an edge $x$ meets in a state $S$ the criteria of the first \textbf{choose} rule in $r$, then there is an update set $\Delta \in \Delta(r,S)$ such that $T(x) = \textit{true}$ holds in $S+\Delta$.\\[0.2cm]   
\hspace*{-0.4cm} $\forall x (E(x)\land\mathit{label}(\mathit{first}(x))\neq\mathit{label}(\mathit{second}(x))\land$\\[0.1cm]
\hspace*{0.2cm} $\forall y (E(y)\land\mathit{label}(\mathit{first}(y))\neq\mathit{label}(\mathit{second}(y))\rightarrow\mathit{weight}(y)\ge\mathit{weight}(x))$\\[0.1cm]
\hspace*{0.2cm} $\rightarrow \exists X (\mathit{upd}(r, X) \wedge [X](T(x) = \textit{true}))) $
\end{itemize}
\end{example}

\section{A Proof System} \label{sec:proof_system}

In this section we develop a proof system for the logic ${\cal L}$ for non-deterministic parallel ASMs.

\begin{definition}\label{def-implied-formula}
We say that a state $S$ is a \emph{model} of a formula $\varphi$ (denoted as $S \models \varphi$) iff $[\![\varphi]\!]_{S,\zeta}= \textit{true}$ holds for every variable assignment $\zeta$.  If $\Psi$ is a set of formulae, we say that $S$ \emph{models} $\Psi$ (denoted as $S \models \Psi$) iff $S \models \varphi$ for each $\varphi \in \Psi$.
A formula $\varphi$ is said to be a \emph{logical consequence} of a set $\Psi$ of formulae (denoted as $\Psi\models\varphi$) if for every state $S$, if $S \models \Psi$, then $S \models \varphi$.
A formula $\varphi$ is said to be \emph{valid} (denoted as $\models \varphi$) if $[\![\varphi]\!]_{S,\zeta}=true$ in every state $S$ for every variable assignment $\zeta$.
A formula $\varphi$ is said to be \emph{derivable} from a set $\Psi$ of formulae (denoted as $\Psi\vdash_{\mathfrak{R}}\varphi$) if there is a deduction from formulae in $\Psi$
to $\varphi$ by using a set $\mathfrak{R}$ of axioms and inference rules.
\end{definition}

We will define such a set $\mathfrak{R}$ of axioms and rules in Subsection \ref{sub:AxiomsRules}. Then we simply write $\vdash$ instead of $\vdash_{\mathfrak{R}}$. We also define equivalence between two ASM rules. Two equivalent
rules $r_1$ and $r_2$ are either both defined or both undefined.

\begin{definition} \label{def-equivalent-rules}Let $r_1$ and $r_2$ be two ASM rules. Then
$r_1$ and $r_2$ are \emph{equivalent} (denoted as $r_1\equiv r_2$)
if for every state $S$ it holds that $S \models \forall X (\mathrm{upd}(r_1,X) \leftrightarrow \mathrm{upd}(r_2,X))$.
\end{definition}

%The substitution of a term $t$ for a variable $x$ in a formula
%$\varphi$ (denoted as $\varphi[t/x]$) is defined by the rule of
%substitution. That is, $\varphi[t/x]$ is the result of replacing all
%free instances of $x$ by $t$ in $\varphi$ provided that no free
%variable of $t$ becomes bound after substitution.

\subsection{Consistency}\label{sub:Consistency}

In \cite{RobertLogicASM} Nanchen and St\"ark use a predicate $\mathrm{Con}(r)$ as an
abbreviation for the statement that the rule $r$ is consistent. As every rule $r$ in their work is
deterministic, there is no ambiguity with the reference to the
update set associated with $r$, i.e., each deterministic rule $r$ generates exactly one (possibly empty) update set. Thus a deterministic rule $r$ is consistent iff the update set generated by $r$ is consistent.
However, in our logic ${\cal L}$, the presence of non-determinism
makes the situation less straightforward.

Let $r$ be an ASM rule and $\Delta$ be an update set. Then the
consistency of an update set $\Delta$, denoted by the formula $\mathrm{conUSet}(X)$ (where $X$ represents $\Delta$), can be expressed as:
\begin{equation}\label{con}
\mathrm{conUSet}(X)\equiv\bigwedge\limits_{f\in
\mathcal{F}_{dyn}}\forall x y z ((X(f,x,y) \wedge X(f, x, z)) \rightarrow y=z)
\end{equation}
Then $\mathrm{con}(r,X)$ is an abbreviation of the following formula
which expresses that an update set $\Delta$ (represented by the variable $X$) generated by the rule $r$
is consistent.
\begin{equation}\label{conr}
\text{con}(r,X)\equiv\mathrm{upd}(r,X)\wedge\mathrm{conUSet}(X)
\end{equation}

As the rule $r$ may be non-deterministic, it is possible that $r$
yields several update sets. Thus, we develop the consistency of ASM rules in
two versions:

\begin{itemize}

\item A rule $r$ is \emph{weakly consistent} (denoted as
$\mathrm{wcon}(r)$) if at least one update set generated
by $r$ is consistent. This can be expressed as follows:
\begin{equation}\label{wcon}
  \text{wcon}(r)\equiv \exists X (\mathrm{con}(r,X))
\end{equation}

\item A rule $r$ is \emph{strongly consistent} (denoted as
scon$(r)$) if every update set generated by $r$
is consistent. This can be expressed as follows:
\begin{equation}\label{scon}
  \text{scon}(r)\equiv \forall X (\text{upd}(r,X)\Rightarrow\text{con}(r,X))
\end{equation}

\end{itemize}

In the case that a rule $r$ is deterministic, the weak notion of
consistency coincides with the strong notion of consistency, i.e.,
$\mathrm{wcon}(r) \leftrightarrow \mathrm{scon}(r)$.

\subsection{Update Sets}\label{sub:UpdateSets}

We present the axioms for the predicate $\mathit{upd}(r,X)$ in Figure~\ref{Fig-AxiomsUpdateSets}.
To simplify the presentation, we give the formulae only for the case in which all the function symbols in ${\cal F}_{\mathit{dyn}}$ correspond to functions on the primary part (finite individual sort) of the state. To deal with dynamic function symbols corresponding to function of the secondary part and to bridge functions, we only need to slightly change the formulae by replacing some of the first-order variables in ${\cal X}_{1}$ by first-order variables in ${\cal X}_{2}$. For instance, if $f$ is a bridge function symbol, we should write $\forall x \mathtt{y} (X(f, x, \mathtt{y}) \rightarrow x = t \wedge \mathtt{y} = \mathtt{s})$ instead of $\forall x y (X(f, x, y) \rightarrow x = t \wedge y = s)$.

%, where inv$(X,f,\bar{x})$ and
%inv$(\ddot{X},f,\bar{x})$ asserting that update sets
%$\Delta$ represented by $X$ and update multisets $\ddot{\Delta}$ represented by $\ddot{X}$ do not have any update to the
%location of the function symbol $f$ at the argument $\bar{x}$,
%respectively. This can be formalised as follows:

%\begin{center}

% inv$(X,f,\bar{x})\equiv$ $\forall y.\neg X(f,\bar{x},y)$

%inv$(\ddot{X},f,\bar{x})\equiv$ $\forall y,z.\neg\ddot{X}(f,\bar{x},y,z)$.

%\end{center}

\begin{figure}[h!]
\begin{tabular*}{1\textwidth}{@{\extracolsep{\fill}}l}\hline\hline
  \\
\textbf{U1}. $\mathrm{upd}(f(t) := s, X) \leftrightarrow X(f,t,s) \wedge \forall x y (X(f, x, y) \rightarrow x = t \wedge y = s) \wedge$ \\[0.2cm]
\hspace{4cm}$\bigwedge\limits_{f \neq f^{\prime} \in\mathcal{F}_{dyn},}\forall x y (\neg X(f^{\prime},x,y))$  \\
\\
\textbf{U2}. $\mathrm{upd}(\textbf{if} \, \varphi \, \textbf{then}\, r \,\textbf{endif}, X) \leftrightarrow (\varphi \wedge \mathrm{upd}(r,X)) \vee (\neg \varphi \wedge \bigwedge\limits_{f\in \mathcal{F}_{dyn}}\forall x y (\neg X(f,x,y))$ \\
\\
\textbf{U3}. $\mathrm{upd}(\textbf{forall} \, x \, \textbf{with} \, \varphi \, \textbf{do} \, r \, \textbf{enddo},X) \leftrightarrow$\\[0.2cm]
\hspace{0.8cm}$\exists {\cal X} \big(\forall x \big( (\varphi \rightarrow \exists Y (\mathrm{upd}(r,Y) \wedge \bigwedge\limits_{f\in \mathcal{F}_{dyn}} \forall y_1 y_2 (Y(f,y_1,y_2) \leftrightarrow {\cal X}(f,y_1,y_2,x)))) \wedge$\\[0.2cm]
\hspace{2.15cm}$( \neg \varphi \rightarrow \bigwedge\limits_{f\in \mathcal{F}_{dyn}} \forall y_1 y_2 (\neg {\cal X}(f,y_1,y_2,x))) \big) \wedge$\\[0.2cm]
\hspace{1.5cm}$\bigwedge\limits_{f \in \mathcal{F}_{dyn}} \forall x_1 x_2 (X(f, x_1, x_2) \leftrightarrow \exists x_3 ({\cal X}(f, x_1, x_2, x_3)))\big)$\\
\\
\textbf{U4}. $\mathrm{upd}(\textbf{par} \, r_1 \; r_2 \, \textbf{endpar}, X) \leftrightarrow \exists Y_1 Y_2 (\mathrm{upd}(r_1,Y_1) \wedge \mathrm{upd}(r_2,Y_2) \wedge$\\[0.2cm]
   \hspace{5cm}$\bigwedge\limits_{f\in\mathcal{F}_{dyn}} \forall x y (X(f,x,y) \leftrightarrow (Y_1(f, x,y) \vee Y_2(f,x,y)))$\\
 \\
\textbf{U5}. $\mathrm{upd}(\textbf{choose}\, x \, \textbf{with} \, \varphi \, \textbf{do} \, r \, \textbf{enddo}, X) \leftrightarrow \exists x (\varphi \wedge \mathrm{upd}(r,X))$ \\
 \\
\textbf{U6}. $\mathrm{upd}(\textbf{choose}\, \mathtt{x} \, \textbf{with} \, \varphi \, \textbf{do} \, r \, \textbf{enddo}, X) \leftrightarrow \exists \mathtt{x} (\varphi \wedge \mathrm{upd}(r,X))$ \\
 \\
\textbf{U7}. $\mathrm{upd}(\textbf{seq} \, r_1 \; r_2 \, \textbf{endseq}, X) \leftrightarrow \big(\text{upd}(r_1,X) \wedge \neg\text{con}(X)\big) \vee$\\[0.2cm]
\hspace{0.8cm}$\big( \exists Y_1 Y_2 (\mathrm{upd}(r_1,Y_1) \wedge \text{con}(Y_1) \wedge [Y_1]\mathrm{upd}(r_2,Y_2) \wedge$\\[0.2cm]
\hspace{1.6cm} $\bigwedge\limits_{f\in\mathcal{F}_{dyn}} \forall x y (X(f,x,y) \leftrightarrow ((Y_1(f,x,y) \wedge \forall z (\neg Y_2(f,x,z))) \vee Y_2(f,x,y))))\big)$ \\
% \\
%\textbf{U7}. $\mathrm{upd}(\textbf{let} \, (f,t)\!\rightharpoonup\!\rho \, \textbf{in} \, r \,\textbf{endlet},X) \leftrightarrow$\\[0.2cm]
%\hspace{0.7cm} $\exists \ddot{X} \big(\mathrm{upm}(r,\ddot{X}) \wedge \forall x y (X(f,x,y) \leftrightarrow ((t = x \wedge y=\rho_{y^{\prime}}(y^{\prime}|\exists \mathtt{z} (\ddot{X}(f,x,y^{\pr%ime},\mathtt{z})))) \vee$\\[0.2cm]
%\hspace{6.2cm} $(t \neq x \wedge \exists \mathtt{z} (\ddot{X}(f,x,y,\mathtt{z}))))) \wedge$\\[0.2cm]
%\hspace{1.2cm} $\bigwedge\limits_{f \neq f' \in \mathcal{F}_{dyn}}\forall x y (X(f',x,y) \leftrightarrow \exists \mathtt{z} (\ddot{X}(f',x,y,\mathtt{z})))\big)$
\\
\hline\hline
\end{tabular*}

\caption{\label{Fig-AxiomsUpdateSets} Axioms for predicate upd(r,$X$)}
\end{figure}

In the following we explain Axioms~\textbf{U1}-\textbf{U7} in turn. We assume a state $S$ of some signature $\Upsilon$ and base set $B = B_1 \cup B_2$, where $B_1$ is the base set of the \emph{finite} primary part of $S$. We also assume a variable assignment $\zeta$.

As in our case an ASM rule may be non-deterministic, a straightforward
extension from the formalisation of the $\textbf{forall}$ and $\textbf{par}$ rules
used in the logic for ASMs in \cite{RobertLogicASM} would not work for
Axioms \textbf{U3} and \textbf{U4}. The axioms correspond to the definition of update sets in Figure~\ref{fig:set}.

\begin{itemize}

\item Axiom \textbf{U1} says that $X$ is an update yielded by the assignment
rule $f(t) := s$ iff it contains exactly one update which is $(f,t,s)$.

\item Axiom \textbf{U2} asserts that, if the formula $\varphi$ evaluates to $\mathit{true}$, then
$X$ is an update set yielded by the conditional rule \textbf{if} $\varphi$
\textbf{then} $r$ \textbf{endif} iff $X$ is an update set yielded by the rule
$r$. Otherwise, the conditional rule yields only an empty update
set.

\item Axiom \textbf{U3} states that $X$ is an update set yielded by the rule \textbf{forall} $x$ \textbf{with} $\varphi$ \textbf{do} $r$ \textbf{enddo} iff
$X$ coincides with $\Delta_{a_1} \cup \cdots \cup \Delta_{a_n}$, where $\{a_1, \ldots, a_n\} = \{ a_i \in B_{1} \mid val_{S,\zeta[x \mapsto a_i]}(\varphi) = \mathit{true}\}$ and $\Delta_{a_i}$ (for $1 \leq i \leq n$) is an update set yielded by the rule $r$ under the variable assignment $\zeta[x \mapsto a_i]$. Note that the update sets $\Delta_{a_1}, \ldots, \Delta_{a_n}$ are encoded into ${\cal X}$.

\item Axiom \textbf{U4} states that $X$ is an update set yielded by the parallel rule \textbf{par} $r_1\hspace{0.2cm} r_2$ \textbf{endpar} iff it corresponds to the
union of an update set yielded by $r_1$ and an update set yielded by $r_2$.

\item Axioms~\textbf{U5} asserts that $X$ is an update set yielded by the rule \textbf{choose} $x$ \textbf{with} $\varphi$
\textbf{do} $r$ \textbf{enddo} iff it is an update set yielded by the rule $r$ under a variable assignment $\zeta[x \mapsto a]$ which satisfies $\varphi$.

\item Axiom \textbf{U6} is similar to Axiom~\textbf{U5}, but for the case of the \textbf{choose} $\mathtt{x}$ \textbf{with} $\varphi$
\textbf{do} $r$ \textbf{enddo} rule.

\item Axiom \textbf{U7} asserts that $X$ is an update set yielded by a sequence rule \textbf{seq} $r_1\hspace{0.2cm} r_2$ \textbf{endseq}
iff it corresponds to either an inconsistent update set yielded by rule $r_1$, or to an update set formed by the updates in an update set $Y_2$ yielded by rule $r_2$ in a successor state $S+Y_1$, where $Y_1$ encodes a consistent set of updates produced by rule $r_1$, plus the updates in $Y_1$ that correspond to locations other than the locations updated by $Y_2$.

%\item Axiom \textbf{U7} asserts that an update multiset is
%collapsed into an update set by aggregating the values of the
%location $(f,t)$ that has been assigned a location operator $\rho$, and
%ignoring the multiplicity of updates if their locations are not $t$.
\end{itemize}

The following lemma is an easy consequence of the axioms in Figure~\ref{Fig-AxiomsUpdateSets}.

\begin{lemma}\label{lem-upd}

Every formula in the logic ${\cal L}$ can be replaced by an equivalent formula not containing any subformulae of the form $\mathrm{upd}(r,X)$.

\end{lemma}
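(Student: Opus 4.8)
The plan is to prove Lemma~\ref{lem-upd} by structural induction on the ASM rule $r$, using the equivalences in Figure~\ref{Fig-AxiomsUpdateSets} as rewrite rules to push occurrences of $\mathrm{upd}(r,X)$ inward and eventually eliminate them. First I would establish a normal-form claim: for every ASM rule $r$ there is a formula $\theta_r(X)$ containing no subformula of the form $\mathrm{upd}(r',X')$ (for any rule $r'$) such that $\models \forall X(\mathrm{upd}(r,X) \leftrightarrow \theta_r(X))$. The base cases are the three update rules, handled by Axiom~\textbf{U1} (and its variants for the secondary and bridge functions mentioned in the text): the right-hand side of \textbf{U1} is already free of $\mathrm{upd}$. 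For the induction step I would treat each rule constructor in turn, assuming inductively that $\mathrm{upd}(r_i, Y)$ has already been replaced by an $\mathrm{upd}$-free formula $\theta_{r_i}(Y)$.

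For the conditional, \textbf{forall}, \textbf{par}, bounded-choice and unbounded-choice rules (Axioms \textbf{U2}--\textbf{U6}), the right-hand sides mention only $\mathrm{upd}$ applied to strict subrules, so substituting the inductive normal forms $\theta_{r_i}$ for those occurrences immediately yields an $\mathrm{upd}$-free equivalent. The only subtlety is the sequence rule, Axiom~\textbf{U7}: its right-hand side contains $[Y_1]\mathrm{upd}(r_2, Y_2)$, i.e.\ an $\mathrm{upd}$ subformula occurring under the modal operator $[Y_1]$. Here I would first replace $\mathrm{upd}(r_2,Y_2)$ by $\theta_{r_2}(Y_2)$ inside the scope of $[Y_1]$; since the semantics of $[Y_1]\psi$ is simply evaluation of $\psi$ in the successor structure $S+\zeta(Y_1)$ (or $\mathit{true}$ when that structure is undefined), and logical equivalence is preserved under passing to any structure, $[Y_1]\mathrm{upd}(r_2,Y_2)$ and $[Y_1]\theta_{r_2}(Y_2)$ are equivalent. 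One also has to recall that $\mathrm{con}(X)$ unfolds (via \eqref{con}) into a formula already free of $\mathrm{upd}$, so no new $\mathrm{upd}$ occurrences are introduced. Finally, once every rule $r$ has its normal form $\theta_r$, an arbitrary formula $\varphi$ of ${\cal L}$ is handled by a second, outer induction on the structure of $\varphi$: replace each atomic subformula $\mathrm{upd}(r,X)$ by $\theta_r(X)$, and observe that this substitution commutes with the connectives, the first-order and predicate-sort quantifiers, and the modal operator $[X']$ (again because equivalent formulae remain equivalent in the successor structure).

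The step I expect to be the main obstacle is the sequence rule together with the handling of the modal operator: one must be careful that the inductive hypothesis gives a \emph{valid} equivalence $\models \forall Y_2(\mathrm{upd}(r_2,Y_2)\leftrightarrow\theta_{r_2}(Y_2))$, not merely one relative to the current state, so that it can be applied inside $[Y_1]$ where evaluation takes place in a different structure $S+\zeta(Y_1)$ (and, when $\zeta(Y_1)$ is inconsistent, in no structure at all, in which case both sides of the modal formula are trivially $\mathit{true}$). A secondary point worth spelling out is well-foundedness of the recursion: although \textbf{U7} refers to $r_1$ and $r_2$ which are proper subrules, the modal prefix $[Y_1]$ does not re-introduce $r$ itself, so the structural induction on $r$ is legitimate. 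With these points addressed, the lemma follows, and as a corollary every ${\cal L}$-formula is equivalent to one built from equalities, the membership predicates $\in^1$ and $\in^2$, the modal operators, connectives and quantifiers, with no $\mathrm{upd}$ predicate at all.
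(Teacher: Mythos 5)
Your proposal is correct and matches the paper's intent: the paper dismisses this lemma as ``an easy consequence of the axioms in Figure~\ref{Fig-AxiomsUpdateSets}'', and your structural induction on $r$ using \textbf{U1}--\textbf{U7} as rewrite rules, with the extra care you take for the $\mathrm{upd}$ occurrence under $[Y_1]$ in \textbf{U7}, is precisely the elaboration of that argument. (One tiny note: the $\mathrm{con}(X)$ in \textbf{U7} is the one-argument $\mathrm{conUSet}(X)$ of Equation~\eqref{con}, which is already $\mathrm{upd}$-free, so your observation there is right for the right reason.)
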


\begin{remark}
The inclusion of the parameter $X$ in the
predicate upd$(r,X)$ is important because a rule $r$ in a non-deterministic parallel ASM rule may be associated with multiple update sets,
and thus we need a way to specify which update set yielded by rule $r$ is meant.
\end{remark}

\subsection{Axioms and Inference Rules}\label{sub:AxiomsRules}

Now we can present a set of axioms and inference rules which constitute a proof system for the logic ${\cal L}$. To avoid unnecessary repetitions of almost identical axioms and rules, we describe them only considering variables of the first individual sort, but the exact same axioms and inference rules are implicitly assumed for the case of variables of the second individual sort as well as for variables of the predicate sorts. In the definition of the set of axioms and rules, we sometimes use $\varphi[t/x]$ to denote the substitution of a term $t$ for a variable $x$ in a formula $\varphi$. That is, $\varphi[t/x]$ is the result of replacing all free instances of $x$ by $t$ in $\varphi$ provided that no free variable of $t$ becomes bound after substitution.

Formally, the set $\mathfrak{R}$ of axioms and inference rules is formed by:
\begin{itemize}
%\item The axioms \textbf{D1}-\textbf{D7} in Fig. \ref{Fig-AxiomsUndefinedness} assert the properties of def$(r)$.

%\smallskip

\item The axioms \textbf{U1}-\textbf{U7} in Fig. \ref{Fig-AxiomsUpdateSets} which assert the properties of upd$(r, X)$.

\smallskip

%\item The axioms \textbf{UM1}-\textbf{UM7} in Fig. \ref{Fig-AxiomsUpdateMultisets} which assert the properties of upm$(r, \ddot{X})$.

%\smallskip

\item Axiom \textbf{M1} and Rules \textbf{M2-M3} from the axiom system K of modal logic, which is the weakest
normal modal logic system \cite{hughes:modallogic1996}. Axiom~\textbf{M1} is called \emph{Distribution Axiom} of K,
Rule~\textbf{M2} is called \emph{Necessitation Rule} of K and Rule~\textbf{M3} is the inference rule called \emph{Modus Ponens} in the
classical logic. By using these axiom and rules together, we are able to derive all modal properties that are valid in Kripke frames.

\begin{description}

  \item[\textbf{M1}] $[X](\varphi\rightarrow\psi)\rightarrow ([X]\varphi\rightarrow[X]\psi)$\smallskip

  \item[\textbf{M2}] $\varphi\vdash [X]\varphi$ \hspace*{3.5cm} \textbf{M3} $\varphi,\varphi\rightarrow\psi\vdash\psi$\smallskip
\end{description}

\item Axiom \textbf{M4} asserts that, if an update set $\Delta$ is not consistent, then
there is no successor state obtained after applying $\Delta$ over
the current state and thus $[X]\varphi$ (for $X$ interpreted by $\Delta$) is interpreted as true
for any formula $\varphi$. As applying a consistent update set $\Delta$
over the current state is deterministic, Axiom \textbf{M5} describes
the deterministic accessibility relation in terms of $[X]$.

\begin{description}

  \item[\textbf{M4}] $\neg \mathrm{conUSet}(X)\rightarrow[X]\varphi $  \hspace*{1.5cm} \textbf{M5} $\neg[X]\varphi\rightarrow [X]\neg\varphi$ \smallskip

\end{description}

\item Axiom \textbf{M6} is called \emph{Barcan Axiom}. It originates from the fact that all
states in a run of a non-deterministic parallel ASM have the same base set, and thus the
quantifiers in all states always range over the same set of
elements.

\begin{description}

\item[\textbf{M6}] $\forall x ([X]\varphi)\rightarrow [X]\forall
x(\varphi)$ \smallskip

\end{description}

\item Axioms \textbf{M7} and \textbf{M8} assert that the interpretation of static or pure formulae is the same in all states of non-deterministic parallel ASMs, since they are not affected by the execution of any ASM rule $r$.
%Note that, depending on the logic that is parameterised into the logic of meta-finite states, static and pure formulae might not be first-order formulae.

\begin{description}
  \item[\textbf{M7}] con$(r,X)\wedge\varphi\rightarrow [X]\varphi$ for
  static or pure $\varphi$\smallskip

  \item[\textbf{M8}]  con$(r,X)\wedge[X]\varphi\rightarrow
  \varphi$ for static or pure $\varphi$\smallskip

\end{description}

\item Axiom \textbf{A1} asserts that, if a consistent update set $\Delta$ (represented by $X$) does
not contain any update to the location $(f,x)$, then the
content of $(f,x)$ in a successor state obtained after
applying $\Delta$ is the same as its content in the current state.
Axiom \textbf{A2} asserts that, if a consistent update set $\Delta$
does contain an update which changes the content of the location
$(f,x)$ to $y$, then the content of $(f,x)$ in
the successor state obtained after applying $\Delta$ is $y$.
%Axiom \textbf{A3} says that, if a DB-ASM
%rule $r$ yields an update multiset, then the rule $r$ also yields an
%update set.

\begin{description}

  \item[\textbf{A1}] $\mathrm{conUSet}(X) \wedge \forall z (\neg X(f,x,z)) \wedge f(x)=y \rightarrow [X]f(x) = y$  \smallskip

  \item[\textbf{A2}] $\mathrm{conUSet}(X) \wedge X(f,x,y) \rightarrow [X]f(x)=y$\smallskip

%  \item[\textbf{\emph{A3}}] $\mathrm{upm}(r,\ddot{X}) \rightarrow \exists  X (\mathrm{upd}(r,X))$  \smallskip

\end{description}

\item The following are axiom schemes from classical logic.

\begin{description}
\item[\textbf{P1}] $\varphi\rightarrow(\psi\rightarrow\varphi)$\smallskip

\item[\textbf{P2}] $(\varphi\rightarrow(\psi\rightarrow\chi))\rightarrow((\varphi\rightarrow\psi)\rightarrow (\varphi\rightarrow\chi))$\smallskip

\item[\textbf{P3}]
$(\neg\varphi\rightarrow\neg\psi)\rightarrow(\psi\rightarrow\varphi)$\smallskip

\end{description}

\item The following four inference rules describe when the universal and existential quantifiers can be added to or deleted from a statement.
Rules~\textbf{UI}, \textbf{EG}, \textbf{UG} and~\textbf{EI} are usually known as \emph{Universal Instantiation}, \emph{Existential Generalisation}, \emph{Universal Generalisation} and \emph{Existential Instantiation}, respectively.

\begin{description}

\item[\textbf{\emph{UI}}] $\forall x (\varphi) \vdash \varphi[t/x]$ if $\varphi$ is pure or $t$ is static. \smallskip

\item[\textbf{\emph{EG}}] $\varphi[t/x] \vdash \exists x(\varphi)$ if $\varphi$ is pure or $t$ is static.\smallskip

\item[\textbf{\emph{UG}}] $\varphi[t_a/x] \vdash \forall x(\varphi)$ if $\varphi[t_a/x]$ holds for every element $a$ in the domain of $x$ and corresponding term $t_a$ representing $a$, and further $\varphi$ is pure or every $t_a$ is static.\smallskip

\item[\textbf{\emph{EI}}] $\exists x (\varphi) \vdash \varphi[t/x]$ if $t$ represents a valuation for $x$ which satisfies $\varphi$, and further $\varphi$ is pure or $t$ is static. \smallskip
\end{description}

\item  The following are the equality axioms from first-order logic with equality.
Axiom~\textbf{EQ1} asserts the reflexivity property while
Axiom~\textbf{EQ2} asserts the substitutions for functions.
%and Axiom~\textbf{EQ3} asserts the substitutions for $\rho$-terms.
%Again, terms occurring in the axioms are restricted to be static, which do
%not contain any dynamic function symbols.

\begin{description}

\item[\textbf{\emph{EQ1}}] $t=t$ for static term $t$ \smallskip

\item[\textbf{\emph{EQ2}}] $t_1=t_{n+1}\wedge...\wedge t_n=t_{2n}\rightarrow f(t_1,...,t_n) = f(t_{n+1},...,t_{2n})$ for any
function $f$ and static terms $t_i$ $(i=1,...,2n)$.\smallskip

%\item[\textbf{\emph{EQ3}}] $t_1=t_2\wedge(\varphi_1(x,\bar{y})\leftrightarrow\varphi_2(x,\bar{y}))\rightarrow \rho_{x}(t_1|\varphi_1(x,\bar{y}))=\rho_{x}(t_2|\varphi_2(x,\bar{y}))$ for pure formulae $\varphi_1$ and $\varphi_2$, and static terms $t_1$ and $t_2$. \smallskip

\end{description}

\smallskip

\item The following axiom is taken from dynamic logic, asserting that executing a \textbf{seq} rule equals to executing rules sequentially.

\begin{description}

%\item[\textbf{\emph{DY1}}] \ $\exists X. \text{upd}(\textbf{seq}\; r_1\text{ }r_2 \;\textbf{endseq}, X) \wedge [X]\varphi \leftrightarrow \exists X_1 . \text{upd}(r_1,X_1) \wedge \exists X_2 . $ $\text{upd}(r_2,X_2) \wedge [X_1] [X_2] \varphi$
\item[\textbf{\emph{DY1}}] \ $\exists X (\text{upd}(\textbf{seq}\; r_1\text{ }r_2 \;\textbf{endseq}, X) \wedge [X]\varphi) \leftrightarrow$\\
\hspace*{3.4cm} $\exists X_1 (\text{upd}(r_1,X_1) \wedge [X_1]\exists X_2(\text{upd}(r_2,X_2) \wedge [X_2] \varphi))$
\end{description}

\item Axiom \textbf{E} is the extensionality axiom.

\begin{description}
 \item[\textbf{\emph{E}}] $r_1\equiv r_2\rightarrow \exists X_1 X_2(($upd$(r_1,X_1)\wedge [X_1]\varphi)\leftrightarrow($upd$(r_2,X_2)\wedge [X_2]\varphi))$

\end{description}
\smallskip
\end{itemize}

The following soundness theorem for the proof system is relatively straightforward, since the non-standard axioms and rules are just a formalisation of the definitions of the semantics of rules, update sets and update multisets.

\begin{theorem}\label{c5-theoremsoundness}

Let $\varphi$ be a formula from ${\cal L}$ and let $\Phi$ be a set of formulae also from ${\cal L}$ (all of them of the same vocabulary as $\varphi$). If $\Phi\vdash\varphi$, then $\Phi\models\varphi$.

\end{theorem}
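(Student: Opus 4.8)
The plan is to establish soundness by structural induction on the derivation $\Phi \vdash \varphi$. Since the proof system $\mathfrak{R}$ consists of axiom schemes and inference rules, it suffices to show two things: every axiom is valid (i.e.\ $\models \varphi$ for each axiom instance $\varphi$), and every inference rule preserves logical consequence (i.e.\ if the premises are logical consequences of $\Phi$, so is the conclusion). The base case handles hypotheses in $\Phi$ (trivial, $\Phi \models \varphi$ when $\varphi \in \Phi$) and axiom instances; the inductive step handles applications of the rules \textbf{M2}, \textbf{M3}, \textbf{UI}, \textbf{EG}, \textbf{UG}, \textbf{EI}.

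First I would dispatch the "classical" and "modal-schema" parts, which are routine: the propositional axioms \textbf{P1}--\textbf{P3}, the equality axioms \textbf{EQ1}--\textbf{EQ2}, and Modus Ponens (\textbf{M3}) are sound exactly as in first-order logic with equality, bearing in mind the side conditions restricting instantiation to static/pure contexts (needed because in non-static contexts the modal operator $[X]$ breaks naive substitution). The K-axioms \textbf{M1}, \textbf{M2} are sound because $[X]$ is a normal box modality over the Kripke frame whose worlds are states and whose accessibility relation sends $S$ to $S+\Delta$ when $\Delta = \zeta(X)$ is a consistent update set in $\Delta(r,S,\zeta)$ (and to nothing when $\Delta$ is inconsistent); \textbf{M4} is immediate from the semantics of $[X]\varphi$ (vacuously true on inconsistent $X$), and \textbf{M5} holds because applying a fixed consistent update set is deterministic, so the accessibility relation is a partial function, making $[X]$ and its dual coincide. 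The Barcan axiom \textbf{M6} is sound precisely because all states share the same base set $B_1 \cup B_2$, so quantifier domains are constant across the accessibility relation.

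The remaining axioms tie the modal and update-set machinery to the ASM semantics of Figure~\ref{fig:set}, and here I would appeal directly to the observation in the theorem statement's preamble: \textbf{U1}--\textbf{U7} are literal transcriptions of the defining equations for $\Delta(r,S,\zeta)$, so their validity is a case-by-case unwinding of the definitions (the only delicate point is \textbf{U3}, where one must check that the second-order variable ${\cal X}$ correctly encodes the family $(\Delta_{a_i})_i$ indexed by the finitely many witnesses $a_i \in B_1$ of $\varphi$ — this uses finiteness of the primary part, guaranteed by the Remark). Axioms \textbf{A1}, \textbf{A2} follow from the definition of $S+\Delta$: a location not mentioned in a consistent $\Delta$ keeps its value, and a location assigned value $y$ in $\Delta$ has value $y$. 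Axioms \textbf{M7}, \textbf{M8} hold because static and pure formulae only mention static function symbols (or just quantifiers and equality), whose interpretation is identical in $S$ and $S+\Delta$. \textbf{DY1} is the standard dynamic-logic sequencing law, sound by chasing the definition of $\oslash$ in clause~9 of Figure~\ref{fig:set} together with the semantics of $[X]$; and the extensionality axiom \textbf{E} is immediate from Definition~\ref{def-equivalent-rules}, since $r_1 \equiv r_2$ means $r_1$ and $r_2$ yield exactly the same update sets in every state.

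The main obstacle I anticipate is the soundness of the quantifier rules \textbf{UI}, \textbf{EG}, \textbf{UG}, \textbf{EI} and of \textbf{M2} (Necessitation) in the presence of the modal operator. Ordinary universal instantiation is unsound inside $[X]$ because a term $t$ that denotes value $a$ in $S$ may denote a different value in $S+\Delta$; this is exactly why each of \textbf{UI}, \textbf{EG}, \textbf{UG}, \textbf{EI} carries the side condition "$\varphi$ is pure or $t$ is static" (and \textbf{UG}, \textbf{EI} additionally require a term $t_a$ for each domain element $a$, i.e.\ a richness/naming assumption on the signature). I would verify that under these side conditions the value of $t$ — hence the truth of $\varphi[t/x]$ — is stable across the relevant states, so the classical arguments go through. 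Similarly, Necessitation "$\varphi \vdash [X]\varphi$" requires care: one must read it as "if $\varphi$ is valid (true in all states under all assignments) then $[X]\varphi$ is valid", which holds because $S+\Delta$, when defined, is again a state, and when undefined $[X]\varphi$ is vacuously true. Once these stability-of-interpretation points are nailed down, the structural induction closes and $\Phi \vdash \varphi \Rightarrow \Phi \models \varphi$ follows.
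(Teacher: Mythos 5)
Your proposal is correct and follows essentially the same route as the paper, which in fact offers no written-out proof at all beyond the one-line observation that the non-standard axioms and rules are direct formalisations of the semantic definitions of Figure~\ref{fig:set}; your structural induction on derivations is exactly the fleshed-out version of that remark. If anything you go further than the paper by explicitly flagging the two genuinely delicate points it glosses over --- that Necessitation \textbf{M2} must be read as preserving validity rather than consequence from arbitrary non-static premises (otherwise $f(c)=0 \vdash [X]f(c)=0$ would break soundness under Definition~\ref{def-implied-formula}), and that the static/pure side conditions on \textbf{UI}, \textbf{EG}, \textbf{UG}, \textbf{EI} are what keep term denotations stable across $S$ and $S+\Delta$.
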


%%%%%%%%%%%%%%%%%%%%%%%%%%%%%%%%%%%%%%%%%%%%%%%%%%%%%%%%%%%%%%%%%%%%%%%%%%%%%%%%%%%%%%%%%%%%%%%%%%%%%%%%

\section{Derivation}\label{sec:Derivation}

In this section we present some properties of the logic for non-deterministic parallel ASMs which are implied by the axioms and rules from the previous section. This includes properties known for the logic for ASMs \cite{RobertLogicASM}. In particular, the logic for ASMs uses the modal expressions $[r]\varphi$ and
$\langle r\rangle\varphi$ with the following semantics:

\begin{itemize}

\item $[\![[r]\varphi]\!]_{S,\zeta}=\textit{true}$ iff $[\![\varphi]\!]_{S+\Delta,\zeta}=\mathit{true}$ for all consistent $\Delta\in \Delta(r,S,\zeta)$.

\item $[\![\langle r\rangle\varphi]\!]_{S,\zeta}=\mathit{true}$ iff $[\![\varphi]\!]_{S+\Delta,\zeta}=\mathit{true}$ for at least one consistent $\Delta\in \Delta(r,S,\zeta)$.

\end{itemize}

Instead of introducing modal operators $[\hspace{0.1cm}]$ and
$\langle\hspace{0.1cm}\rangle$ for a non-deterministic parallel
ASM rule $r$, we use the modal expression $[X]\varphi$ for an update set yielded by a
possibly non-deterministic rule. The modal expressions $[r]\varphi$
and $\langle r\rangle\varphi$ in the logic for ASMs can be treated
as the shortcuts for the following formulae in our logic: 
\begin{equation}\label{ASM1}
[r]\varphi \equiv \forall X (\text{upd}(r,X)\rightarrow[X]\varphi).
\end{equation}
\begin{equation}\label{ASM2}
\langle r\rangle\varphi \equiv\exists
X (\text{upd}(r,X)\wedge[X]\varphi).
\end{equation}

%$[r]\varphi \equiv \forall X (\text{upd}(r,X)\rightarrow[X]\varphi)$ and $\langle r\rangle\varphi \equiv\exists X (\text{upd}(r,X)\wedge[X]\varphi)$.

\begin{lemma}\label{lem-modalAxiomsASMs}
The following axioms and rules used in the logic for ASMs are derivable in $\cal{L}$, where the rule $r$ in Axioms (c) and (d) is assumed to be defined and deterministic: $(a)$ $([r](\varphi\rightarrow\psi)\rightarrow [r]\varphi)\rightarrow[r]\psi$;
$(b)$ $\varphi\rightarrow[r]\varphi$; $(c)$ $\neg$wcon$(r)\rightarrow [r]\varphi$; $(d)$ $[r]\varphi\leftrightarrow \neg[r]\neg\varphi$.
\end{lemma}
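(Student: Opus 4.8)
The plan is to derive each of the four statements $(a)$--$(d)$ from the definitions \eqref{ASM1}--\eqref{ASM2} of $[r]\varphi$ and $\langle r\rangle\varphi$ as shortcuts, together with the axioms and rules of $\mathfrak{R}$. Throughout I will freely use propositional reasoning (justified by \textbf{P1}--\textbf{P3} and \textbf{M3}) and the quantifier rules \textbf{UI}, \textbf{EG}, \textbf{UG}, \textbf{EI}, applied to the predicate-sort variable $X$; note that formulae of the form $\mathrm{upd}(r,X)\to[X]\varphi$ are not pure, so I must be slightly careful that the instantiating ``terms'' for $X$ are the update sets actually yielded by $r$, which is exactly the side condition under which these rules apply.

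For $(a)$ and $(b)$, the point is that $[X]$ is a normal modal operator (it satisfies \textbf{M1} and, via \textbf{M2}, necessitation) and that $[r]$ is obtained by relativising a universal quantifier over $X$ to $\mathrm{upd}(r,X)$. First I would unfold $[r](\varphi\to\psi)$, $[r]\varphi$, $[r]\psi$ using \eqref{ASM1}; then for a fixed $X$ with $\mathrm{upd}(r,X)$ I apply \textbf{M1} to get $[X](\varphi\to\psi)\to([X]\varphi\to[X]\psi)$, push the universal quantifier over $X$ through the implications using \textbf{M6}-style quantifier manipulation and the deduction-theorem-level propositional steps, and reassemble to obtain $(a)$. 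For $(b)$, from $\varphi$ and \textbf{M2} I get $[X]\varphi$ for every $X$, hence $\mathrm{upd}(r,X)\to[X]\varphi$ by \textbf{P1}, hence $\forall X(\mathrm{upd}(r,X)\to[X]\varphi)$ by \textbf{UG}, which is $[r]\varphi$; here I should note that the side condition on \textbf{UG} is met because $\varphi$ being arbitrary, the conclusion follows uniformly, and \textbf{M2} supplies $[X]\varphi$ for each admissible instantiation of $X$.

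For $(c)$, $\neg\mathrm{wcon}(r)$ unfolds via \eqref{wcon}, \eqref{conr}, \eqref{con} to $\neg\exists X(\mathrm{upd}(r,X)\wedge\mathrm{conUSet}(X))$, i.e.\ $\forall X(\mathrm{upd}(r,X)\to\neg\mathrm{conUSet}(X))$. Fixing an $X$ with $\mathrm{upd}(r,X)$, I get $\neg\mathrm{conUSet}(X)$, then apply \textbf{M4} to conclude $[X]\varphi$; discharging the assumption and generalising over $X$ with \textbf{UG} yields $\forall X(\mathrm{upd}(r,X)\to[X]\varphi)$, which is $[r]\varphi$. This direction does not even need $r$ to be deterministic, but the hypothesis is harmless. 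For $(d)$, where $r$ is defined and deterministic, the key extra fact is that a defined deterministic rule yields exactly one update set: I would argue (using the determinism/definedness hypothesis, and Lemma~\ref{lem-upd} or the axioms \textbf{U1}--\textbf{U7} to the extent needed) that $\exists X\,\mathrm{upd}(r,X)$ holds and that $\mathrm{upd}(r,X)\wedge\mathrm{upd}(r,Y)\to X=Y$; call this unique set $X_0$. Then $[r]\varphi$ reduces to $[X_0]\varphi$ and $\neg[r]\neg\varphi$ reduces to $\neg[X_0]\neg\varphi$, and the equivalence $[X_0]\varphi\leftrightarrow\neg[X_0]\neg\varphi$ follows from \textbf{M4}, \textbf{M5} and the fact that for a consistent $X_0$ the accessibility relation is deterministic (this is precisely the content of \textbf{M5} together with seriality supplied when $X_0$ is consistent; when $X_0$ is inconsistent both sides are vacuously true by \textbf{M4}). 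I then re-existentialise/re-universalise to get back to the $[r]$ form.

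The main obstacle I anticipate is $(d)$: making the ``unique update set'' argument fully rigorous inside the proof system, since determinism and definedness are meta-level hypotheses about $r$ rather than formulae of $\mathcal{L}$, so I must phrase the uniqueness as a derivable schema (e.g.\ for each concrete deterministic $r$, $\vdash \mathrm{upd}(r,X)\wedge\mathrm{upd}(r,Y)\to \bigwedge_{f}\forall xy(X(f,x,y)\leftrightarrow Y(f,x,y))$, proved by induction on the structure of $r$ using \textbf{U1}--\textbf{U7}) and then combine it with \textbf{M5}. The case split on whether the unique update set is consistent — handled by \textbf{M4} on one side and by \textbf{M5} plus \textbf{A1}/\textbf{A2} giving a genuine (serial, functional) successor on the other — is the delicate point; everything else is routine modal and propositional bookkeeping, essentially mirroring the corresponding derivations in \cite{RobertLogicASM}.
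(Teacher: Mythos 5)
Your proposal is correct and follows essentially the same route as the paper's proof: unfold $[r]\varphi$ via the abbreviation $\forall X(\mathrm{upd}(r,X)\rightarrow[X]\varphi)$, use \textbf{M1}--\textbf{M3} together with propositional and quantifier reasoning for (a) and (b), unfold $\mathrm{wcon}(r)$ and apply \textbf{M4} for (c), and use \textbf{M5} plus the determinism of $r$ (collapsing $\forall X$ and $\exists X$ over the unique update set) for (d). You are in fact more explicit than the paper on the one step it leaves implicit in (d), namely that determinism must be cashed out as a derivable uniqueness schema for $\mathrm{upd}(r,X)$, and your remark that (c) needs no determinism is a genuine small improvement over the paper's argument, which invokes determinism there unnecessarily. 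One caution on (d): your claim that for an inconsistent $X_0$ ``both sides are vacuously true by \textbf{M4}'' is not right --- if $X_0$ is inconsistent then $[X_0]\neg\varphi$ is true, so $\neg[X_0]\neg\varphi$ is \emph{false} while $[X_0]\varphi$ is true, and the biconditional fails. \textbf{M5} alone only gives the one-way implication $\neg[X]\neg\varphi\rightarrow[X]\varphi$; the converse needs consistency (compare property (f) of Lemma~\ref{lem-soundness-modaloperator}). The paper's proof silently assumes the unique update set of the defined deterministic rule is consistent, and you should make that assumption explicit rather than attempt to argue the inconsistent case through.
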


\begin{proof} We prove each property in the following.

\begin{itemize}
  \item (a): By Equation~\ref{ASM1}, we have that $[r](\varphi\rightarrow\psi)\wedge [r]\varphi \equiv \forall X (\mathrm{upd}(r,X)\rightarrow [X](\varphi\rightarrow\psi)) \wedge \forall X (\mathrm{upd}(r,X)\rightarrow [X]\varphi)$. By the axioms from classical logic, this is in turn equivalent to $\forall X (\mathrm{upd}(r,X)\rightarrow([X](\varphi\rightarrow\psi)\wedge[X]\varphi))$.
Then by Axiom~\textbf{M1} and axioms from the classical logic, we get $\forall X (\mathrm{upd}(r,X)\rightarrow ([X](\varphi\rightarrow\psi)\wedge [X]\varphi))\rightarrow \forall X (\mathrm{upd}(r,X)\rightarrow [X]\psi)$.
Therefore, $([r](\varphi\rightarrow\psi)\rightarrow [r]\varphi)\rightarrow [r]\psi$ is derivable.

%\begin{itemize}
%  \item By $[r]\varphi= \forall X. ($upd$(r,X)\rightarrow [X]\varphi)$, we have
%
%   $[r](\varphi\rightarrow\psi)\wedge [r]\varphi$$= \forall X. ($upd$(r,X)\rightarrow [X](\varphi\rightarrow\psi))\wedge\forall X. ($upd$(r,X)\rightarrow [X]\varphi)$.
%  \item By the axioms from classical logic, we have $[r](\varphi\rightarrow\psi)\wedge [r]\varphi$$= \forall X. ($upd$(r,X)\rightarrow ([X](\varphi\rightarrow\psi)\wedge [X]\varphi))$.
%  \item   Then by Axiom \textbf{M1}: $[X](\varphi\rightarrow\psi)\rightarrow [X]\varphi\rightarrow[X]\psi$, we can get
%  $\forall X. ($upd$(r,X)\rightarrow ([X](\varphi\rightarrow\psi)\wedge [X]\varphi))$$\rightarrow \forall X. ($upd$(r,X)\rightarrow [X]\psi)$.
%\end{itemize}

  \item (b): By Rule~\textbf{M2}, we have that $\varphi\rightarrow[X_i]\varphi$. Since $X$ is free in $\varphi\rightarrow[X]\varphi$, this holds for every possible valuation of $X$. Thus using Rule~\textbf{UG} (applied to the variable $X$ of the first predicate sort) and the axioms from classical logic, we can clearly derive $\varphi \rightarrow \forall X (\mathrm{upd}(r,X) \rightarrow [X]\varphi)$.

%As each DB-ASM rule is defined, we assume that
%  $\Delta(r,S,\zeta)=\{\Delta_1,...,\Delta_n\}$. By Rule \textbf{M2}, we have $\varphi\rightarrow[X_i]\varphi$ $(i=1,...,n)$ for all update sets $\{\Delta_1,..,\Delta_n\}$ generated by $r$, i.e., $\%varphi\rightarrow\forall X. ($upd$(r,X)\rightarrow
%  [X]\varphi)$. By Equation \ref{ASM1}, we get
%  $\varphi\rightarrow[r]\varphi$. Therefore, $\varphi\rightarrow[r]\varphi$ is derivable.

  \item (c): By Equation~\ref{wcon}, we have $\neg \mathrm{wcon}(r)\leftrightarrow \neg\exists
  X (\mathrm{con}(r,X)$. In turn, by Equation~\ref{conr}, we get $\neg \mathrm{wcon}(r) \leftrightarrow \neg \exists X (\mathrm{upd}(r,X)\wedge \mathrm{conUSet}(X))$.
  Since a rule $r$ in the logic for ASMs is deterministic, we get $\neg \mathrm{wcon}(r)\leftrightarrow \neg \mathrm{conUSet}(X)$. By Axiom \textbf{M4}, we get $\neg \mathrm{wcon}(r)\rightarrow
  [r]\varphi$.

  \item (d): By Equation~\ref{ASM1}, we have $\neg[r]\neg\varphi \equiv \exists X(\mathrm{upd}(r,X)\wedge \neg[X]\neg\varphi)$. By applying Axiom~\textbf{M5} to $\neg[X]\neg\varphi$, we get
$\neg[r]\neg\varphi \equiv\exists X(\mathrm{upd}(r,X)\wedge[X]\varphi)$. When the rule $r$ is deterministic, the interpretation of $\forall X (\mathrm{upd}(r,X)\rightarrow [X]\varphi)$ coincides with he interpretation of $\exists X (\mathrm{upd}(r,X)\wedge [X]\varphi)$ and therefore $[r]\varphi\leftrightarrow \neg[r]\neg\varphi$.
\end{itemize}
\end{proof}
%The logic for ASMs introduced in \cite{RobertLogicASM} is deterministic, i.e., it excludes nondeterministic choice rules.
%In contrast, our logic for DB-ASMs includes a nondeterministic choice rule.

Note that the formula Con$(R)$ in \textbf{Axiom 5} in \cite{RobertLogicASM} (i.e.,
in $\neg$Con$(R)\rightarrow [R]\varphi$) corresponds to the weak version of consistency (i.e., wcon$(r)$)
in the theory of ${\cal L}$.

\begin{lemma}\label{lem-soundness-modaloperator}
The following properties are derivable in $\cal{L}$: $(e)$ $\mathrm{con}(r,X) \wedge [X]f(x)=y\rightarrow X(f,x,y)\vee(\forall z (\neg X(f,x,z)) \wedge f(x)=y)$; $(f)$ $\mathrm{con}(r,X) \wedge [X]\varphi \rightarrow \neg[X]\neg\varphi$; $(g)$ $[X]\exists x(\varphi) \rightarrow \exists x ([X]\varphi)$; $(h)$ $[X]\varphi_1 \wedge [X]\varphi_2 \rightarrow [X] (\varphi_1 \wedge \varphi_2)$.
\end{lemma}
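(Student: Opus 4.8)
The plan is to derive the four properties (e)--(h) from the modal axioms \textbf{M1}--\textbf{M6}, the assignment axioms \textbf{A1}--\textbf{A2}, and the classical propositional and quantifier apparatus. I will treat each in turn, since they are largely independent, and I will freely use the fact (via the semantics of $[X]$) that an inconsistent $X$ makes $[X]\psi$ true for every $\psi$, so that the interesting case in each derivation is when $\mathrm{conUSet}(X)$ holds --- but I will keep the reasoning purely proof-theoretic, splitting on $\mathrm{conUSet}(X) \vee \neg\mathrm{conUSet}(X)$ and using \textbf{M4} on the inconsistent branch.

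For (e): assume $\mathrm{con}(r,X) \wedge [X]f(x)=y$; in particular $\mathrm{conUSet}(X)$ holds. Argue by cases on whether $\exists z(X(f,x,z))$. If $X(f,x,z)$ for some $z$, then by \textbf{A2} we get $[X]f(x)=z$; combining with the hypothesis $[X]f(x)=y$ and the functionality of $[X]$ on consistent update sets --- formally, from $[X]f(x)=y$ and $[X]f(x)=z$ and \textbf{M1}/\textbf{M8}-style reasoning (equality is pure, so $[X](f(x)=y \wedge f(x)=z \rightarrow y=z)$ is available by necessitation \textbf{M2} and then \textbf{M8} pulls $y=z$ back out, $y=z$ being pure) --- we conclude $y=z$, hence $X(f,x,y)$, giving the left disjunct. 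If instead $\forall z(\neg X(f,x,z))$, then \textbf{A1} is not yet applicable because we still need $f(x)=y$ in the current state; but we obtain it from $\mathrm{conUSet}(X) \wedge \forall z(\neg X(f,x,z)) \wedge [X]f(x)=y$ by contraposition of \textbf{A1}: were $f(x)=y$ false, say $f(x)=y'$ with $y'\neq y$, then \textbf{A1} would yield $[X]f(x)=y'$, contradicting $[X]f(x)=y$ via the same purity-of-equality argument. So $f(x)=y$ holds, giving the right disjunct. Property (f) is essentially \textbf{M5} read contrapositively together with the consistency hypothesis: from $[X]\varphi$ and \textbf{M5} ($\neg[X]\neg\varphi \rightarrow [X]\neg\neg\varphi$ in the form $\neg[X]\psi \rightarrow [X]\neg\psi$), one derives $\neg[X]\neg\varphi$ provided a successor state exists, i.e. provided $\mathrm{conUSet}(X)$; on the inconsistent branch the implication is vacuous since its conclusion $\neg[X]\neg\varphi$ would be $\neg\text{true}$, so we genuinely need $\mathrm{con}(r,X)$ in the hypothesis --- wait, more carefully: when $X$ is inconsistent, $[X]\neg\varphi$ is true, so $\neg[X]\neg\varphi$ is false and (f) would fail; this is exactly why $\mathrm{con}(r,X)$ (hence $\mathrm{conUSet}(X)$) sits in the antecedent, and on that branch \textbf{M5} applies directly.

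For (g): this is the converse Barcan-type direction. Instantiate \textbf{M6} with $\neg\varphi$ in place of $\varphi$: $\forall x([X]\neg\varphi) \rightarrow [X]\forall x(\neg\varphi)$. Contrapose to get $\neg[X]\forall x(\neg\varphi) \rightarrow \neg\forall x([X]\neg\varphi)$, i.e. $\neg[X]\forall x(\neg\varphi) \rightarrow \exists x(\neg[X]\neg\varphi)$. Now I want to move the negations through $[X]$ and through the quantifier: $[X]\exists x(\varphi)$ is classically $[X]\neg\forall x(\neg\varphi)$; using \textbf{M5} (which lets $\neg$ pass out of $[X]$, at least on one side) and \textbf{M1}/\textbf{M2} for the reverse, one shows $[X]\neg\forall x(\neg\varphi) \rightarrow \neg[X]\forall x(\neg\varphi)$ --- this is again the "(f)-style" step requiring care on the inconsistent branch, so I will do the case split: if $X$ is inconsistent, $[X]\exists x(\varphi)$ and $[X]\varphi$ are both true for all valuations, so $\exists x([X]\varphi)$ holds trivially (the domain is nonempty); if $X$ is consistent, the successor state is a genuine first-order structure and the equivalence of $[X]\exists x \varphi$ with $\exists x [X]\varphi$ is just the commutation of existential quantification with evaluation in that single successor state, which we obtain proof-theoretically from the contraposed \textbf{M6} plus \textbf{M5} plus $\exists x(\neg[X]\neg\varphi) \rightarrow \exists x([X]\varphi)$ (the last by (f) under $x$). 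Finally (h) is the standard K-logic fact that $[X]$ distributes over conjunction: from $\varphi_1 \wedge \varphi_2 \rightarrow (\varphi_1 \rightarrow (\varphi_2 \rightarrow (\varphi_1\wedge\varphi_2)))$ --- trivially, just take the tautology $\varphi_1 \rightarrow (\varphi_2 \rightarrow \varphi_1 \wedge \varphi_2)$ --- apply necessitation \textbf{M2} to get $[X](\varphi_1 \rightarrow (\varphi_2 \rightarrow \varphi_1\wedge\varphi_2))$, then apply \textbf{M1} twice to distribute the implication, yielding $[X]\varphi_1 \rightarrow ([X]\varphi_2 \rightarrow [X](\varphi_1\wedge\varphi_2))$, which is (h) after uncurrying with \textbf{P1}--\textbf{P3}.

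The main obstacle I anticipate is the repeated need, in (e), (f), and (g), to justify "moving a negation across the modal box" and "functionality of the box" cleanly within the Hilbert system: the semantics makes these obvious ($[X]$ is a \emph{partial deterministic} accessibility relation), but syntactically they rest on the subtle interplay of \textbf{M4} (box of an inconsistent set is everything) and \textbf{M5} (box is "almost" self-dual), and one must be disciplined about always performing the $\mathrm{conUSet}(X) \vee \neg\mathrm{conUSet}(X)$ case split so that the inconsistent branch is discharged by \textbf{M4} while the consistent branch uses \textbf{M5}. Everything else is routine propositional bookkeeping with \textbf{P1}--\textbf{P3}, \textbf{M3}, and the quantifier rules \textbf{UI}/\textbf{EG}/\textbf{UG}/\textbf{EI}, noting that equality and the quantifier-commutation formulae involved are \emph{pure}, so the side conditions on those rules are met.
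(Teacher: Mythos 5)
Your derivations use exactly the axioms the paper's own proof cites for each item --- \textbf{A1}/\textbf{A2} for (e), \textbf{M5} for (f), \textbf{M5} and \textbf{M6} for (g), and \textbf{M1}--\textbf{M3} for (h) --- so this is essentially the same argument, just carried out in far more detail than the paper's one-line justification. The additional care you take with the $\mathrm{conUSet}(X)$ case split, the use of \textbf{M8} on pure equalities, and the purity side conditions on the quantifier rules is consistent with, and more explicit than, what the paper records.
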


\begin{proof}
(e) is derivable by applying Axioms~\textbf{A1} and~\textbf{A2}. (f)
is a straightforward result of Axiom~\textbf{M5}. (g) can be derived
by applying Axioms~\textbf{M5} and~\textbf{M6}. Regarding (h), it is
derivable by using Axioms \textbf{M1}-\textbf{M3}.
\end{proof}

\begin{lemma}For terms and variables of the appropriate types, the following properties in \cite{GroenboomFLEA95} are derivable in $\cal{L}$.
\begin{itemize}
  \item $x=t \rightarrow (y=s \leftrightarrow [f(t):=s] f(x)=y)$
  \item $x \neq t \rightarrow (y = f(x) \leftrightarrow[f(t):=s]f(x)=y)$
\end{itemize}
\end{lemma}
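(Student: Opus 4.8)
The plan is to derive each of the two displayed formulae directly from the update-set axioms of Figure~\ref{Fig-AxiomsUpdateSets}, together with the modal accessibility axioms \textbf{M4}, \textbf{M5}, \textbf{A1} and \textbf{A2}, following the pattern already used in the proof of Lemma~\ref{lem-soundness-modaloperator}(e). The key observation is that by Axiom~\textbf{U1}, $\mathrm{upd}(f(t):=s, X)$ forces $X$ to be exactly the singleton update set $\{(f,t,s)\}$ (modulo the encoding of $t,s$ by their values). Since this is a single, manifestly consistent update set, the modal expression $[f(t):=s]\varphi$ --- which by Equation~\ref{ASM1}/\ref{ASM2} unfolds to a quantification over all $X$ with $\mathrm{upd}(f(t):=s,X)$ --- collapses to a statement about that one update set, and we may reason with $[X]$ where $X$ satisfies $X(f,t,s)$ and nothing else.

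For the first property, I would assume $x=t$ and split into the two directions. For the forward direction, assume $y=s$; then from $x=t$ and $y=s$ and the equality axioms \textbf{EQ1}, \textbf{EQ2} we get $X(f,x,y)$ (since $X(f,t,s)$ holds by \textbf{U1}), and $\mathrm{conUSet}(X)$ holds because the single-update set is trivially consistent; then Axiom~\textbf{A2} yields $[X]f(x)=y$, and since the rule is deterministic and defined this is equivalent to $[f(t):=s]f(x)=y$ via Equation~\ref{ASM1}. For the converse, assume $[f(t):=s]f(x)=y$; instantiate the universal over $X$ at the unique update set given by \textbf{U1}, obtaining $[X]f(x)=y$; by property~(e) of Lemma~\ref{lem-soundness-modaloperator}, together with $\mathrm{conUSet}(X)$, this gives $X(f,x,y) \vee (\forall z(\neg X(f,x,z)) \wedge f(x)=y)$; the second disjunct is impossible because $x=t$ and $X(f,t,s)$ together with \textbf{EQ2} give $X(f,x,s)$, so we are in the first disjunct, $X(f,x,y)$, and then \textbf{U1} (the clause $\forall x y (X(f,x,y) \rightarrow x=t \wedge y=s)$) forces $y=s$.

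For the second property, assuming $x \neq t$, the argument is analogous but now the location $(f,x)$ is \emph{not} touched by the update set $X = \{(f,t,s)\}$: from $x \neq t$ and the clause of \textbf{U1} we derive $\forall z (\neg X(f,x,z))$. For the direction from $y=f(x)$ to $[f(t):=s]f(x)=y$, apply Axiom~\textbf{A1} with the hypotheses $\mathrm{conUSet}(X)$, $\forall z(\neg X(f,x,z))$ and $f(x)=y$ to get $[X]f(x)=y$, then pass to $[f(t):=s]$. For the converse, from $[X]f(x)=y$ and property~(e) we again obtain the disjunction; the first disjunct $X(f,x,y)$ contradicts $\forall z(\neg X(f,x,z))$, so the second disjunct holds, giving in particular $f(x)=y$.

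I do not anticipate a genuine obstacle here; the statement is essentially a corollary of the assignment axiom \textbf{U1} and the frame/update axioms \textbf{A1}--\textbf{A2} plus Lemma~\ref{lem-soundness-modaloperator}(e), and indeed the paper flags it as one of the ``easy consequence'' properties. The only point requiring mild care is the passage between the modal operator $[f(t):=s](\cdot)$ of the ASM logic of~\cite{GroenboomFLEA95} and the update-set modality $[X](\cdot)$ of ${\cal L}$: one must check that, because $f(t):=s$ is defined and yields exactly one (consistent) update set, the $\forall X$ and $\exists X$ readings in Equations~\ref{ASM1} and~\ref{ASM2} coincide and can be discharged by \textbf{UI}/\textbf{EI} against the witness supplied by \textbf{U1}. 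Handling the substitution side conditions of \textbf{UI}/\textbf{EI} is routine since all terms involved are of individual sort and the relevant formulae are pure in the places where substitution occurs.
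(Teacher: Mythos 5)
The paper gives no proof of this lemma (it is one of the derivations left as an exercise), so there is nothing to compare against; judged on its own, your derivation is correct and is exactly the natural one: unfold $[f(t):=s]\varphi$ via Equation~\ref{ASM1}, use \textbf{U1} to pin $X$ down to the singleton $\{(f,t,s)\}$ (whence $\mathrm{conUSet}(X)$), and then discharge the four directions with \textbf{A1}, \textbf{A2} and Lemma~\ref{lem-soundness-modaloperator}(e), using the second conjunct of \textbf{U1} to decide whether the location $(f,x)$ is touched according to whether $x=t$ or $x\neq t$. The only step that is not literally covered by the axioms as listed is your appeal to ``the equality axioms'' to pass from $X(f,t,s)$ and $x=t\wedge y=s$ to $X(f,x,y)$: \textbf{EQ1}/\textbf{EQ2} only give reflexivity and congruence for \emph{function} symbols, not substitution of equals inside the membership predicate $\in^1$, so one must additionally invoke a Leibniz-style congruence principle for $\in^1$ (which the intended semantics validates and which the axiomatisation of finite sets the paper imports is meant to supply); similarly, producing the witness $X$ for the \textbf{EI} step needs the existence of the finite set $\{(f,t,s)\}$ in the predicate-sort domain, again guaranteed by that set theory rather than by \textbf{U1} alone. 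These are minor bookkeeping points, not gaps in the argument.
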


%In non-deterministic parallel ASMs, two parallel computations may produce an update multiset,
%in which there are identical updates to a location assigned with a
%location operator. Without an outer \textbf{let}-construct \textbf{par} $ r \; r $ \textbf{endpar}
%could be simplified to $r$. This however is no longer the case if we consider update multisets.

Following the approach of defining the predicate joinable in
\cite{RobertLogicASM}, we define the predicate joinable over two
non-deterministic parallel ASMs rules. As we consider non-deterministic parallel ASMs rules,
the predicate joinable$(r_1,r_2)$ means that there exists a pair of
update sets without conflicting updates, which are yielded by rules
$r_1$ and $r_2$, respectively. Then, based on the use of predicate
joinable, the properties in Lemma \ref{lem-soundness-consistency}
are all derivable.
\begin{equation}\label{joinable}
\begin{split}
\text{joinable}(r_1,r_2) \equiv &  \exists X_1 X_2 (\mathrm{upd}(r_1,X_1)\wedge\mathrm{upd}(r_2,X_2)\wedge \\
  &  \bigwedge\limits_{f\in\mathcal{F}_{dyn}}\forall x y z (X_1(f,x,y)\wedge X_2(f,x,z)\rightarrow y=z ))
\end{split}
\end{equation}

\begin{lemma}\label{lem-soundness-consistency}
The following properties for weak consistency are derivable in ${\cal L}$.
\begin{description}
  \item[(i)] $\mathrm{wcon}(f(t):=s) \qquad \mathrm{{\bf (j)}} \,\mathrm{wcon}(f(t):=\mathtt{s}) \qquad \mathrm{{\bf (k)}} \,\mathrm{wcon}(f(\mathtt{t}):=\mathtt{\mathtt{s}})$
  \item[(j)] $\mathrm{wcon}(\textbf{if}\, \varphi\, \textbf{then}\, r \,\textbf{endif}) \leftrightarrow \neg\varphi\vee(\varphi\wedge \mathrm{wcon}(r))$
  \item[(l)] $\mathrm{wcon}(\textbf{forall} \, x \, \textbf{with} \, \varphi \, \textbf{do} \, r \, \textbf{enddo}) \leftrightarrow$\\
    \hspace*{3.5cm}$\forall x (\varphi\rightarrow \mathrm{wcon}(r) \wedge \forall y (\varphi[y/x]\rightarrow \text{joinable}(r,r[y/x])))$
\item[(m)] $\mathrm{wcon}(\textbf{par} \, r_1 \, r_2 \, \textbf{endpar}) \leftrightarrow \mathrm{wcon}(r_1) \wedge \mathrm{wcon}(r_2) \wedge joinable(r_1,r_2)$
  \item[(n)] $\mathrm{wcon}(\textbf{choose} \, x \, \textbf{with} \, \varphi \, \textbf{do} \, r \, \textbf{enddo}) \leftrightarrow \exists x (\varphi\wedge \mathrm{wcon}(r))$
  \item[(o)] $\mathrm{wcon}(\textbf{choose} \, \mathtt{x} \, \textbf{with} \, \varphi \, \textbf{do} \, r \, \textbf{enddo}) \leftrightarrow \exists \mathtt{x} (\varphi\wedge \mathrm{wcon}(r))$
  \item[(p)] $\mathrm{wcon}(\textbf{seq} \, r_1 \, r_2 \, \textbf{endseq}) \leftrightarrow \exists X (\mathrm{con}(r_1,X) \wedge [X]\mathrm{wcon}(r_2))$
%  \item[(o)] $\mathrm{wcon}(\textbf{let} \, (f,t) \!\rightharpoonup\! \rho \, \textbf{in} \, r \textbf{endlet}) \leftrightarrow$\\
%\hspace*{1.3cm}$\exists X Y (\mathrm{upd}(r,X) \wedge \mathrm{conUSet}(Y) \wedge \forall x y (Y(f,x,y) \leftrightarrow (t = x \vee X(f,x,y))) \wedge $\\
%\hspace*{2cm}   $\bigwedge\limits_{f \neq f'\in\mathcal{F}_{dyn}}\forall x y (X(f',x,y) \leftrightarrow Y(f',x,y)))$
\end{description}
\end{lemma}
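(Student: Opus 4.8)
The plan is to prove Lemma~\ref{lem-soundness-consistency} by structural induction on the ASM rule $r$, deriving each equivalence \textbf{(i)}--\textbf{(p)} from the corresponding axiom \textbf{U1}--\textbf{U7} for the $\mathrm{upd}$ predicate, the definitions \eqref{con}--\eqref{wcon} of $\mathrm{conUSet}$, $\mathrm{con}$ and $\mathrm{wcon}$, and the definition \eqref{joinable} of $\mathrm{joinable}$. First I would unfold $\mathrm{wcon}(r) \equiv \exists X(\mathrm{upd}(r,X) \wedge \mathrm{conUSet}(X))$ in each case and then substitute the right-hand side of the matching \textbf{U}-axiom for $\mathrm{upd}(r,X)$, pushing the existential quantifier over $X$ through the resulting formula using the propositional axioms \textbf{P1}--\textbf{P3} and the quantifier rules \textbf{UI}, \textbf{EG}, \textbf{UG}, \textbf{EI}.

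For the base cases \textbf{(i)}--\textbf{(k)}: from \textbf{U1} an update set yielded by $f(t):=s$ contains exactly the single update $(f,t,s)$, so $\mathrm{conUSet}(X)$ holds trivially (a singleton cannot contain two updates to the same location with distinct values); hence $\exists X(\mathrm{upd}(f(t):=s,X) \wedge \mathrm{conUSet}(X))$ is derivable outright. The conditional case \textbf{(j)} (the second one, for \textbf{if}) follows by case analysis on $\varphi$ using \textbf{U2}: if $\neg\varphi$ then the unique update set is empty and vacuously consistent; if $\varphi$ then $\mathrm{wcon}$ of the conditional reduces to $\mathrm{wcon}(r)$. The two \textbf{choose} cases \textbf{(n)} and \textbf{(o)} are immediate from \textbf{U5} and \textbf{U6} respectively: since those axioms equate $\mathrm{upd}(\textbf{choose}\,x\,\textbf{with}\,\varphi\,\textbf{do}\,r\,\textbf{enddo},X)$ with $\exists x(\varphi \wedge \mathrm{upd}(r,X))$, conjoining $\mathrm{conUSet}(X)$ and existentially quantifying $X$ commutes with the $\exists x$ and with $\varphi$ (which does not mention $X$), giving $\exists x(\varphi \wedge \mathrm{wcon}(r))$. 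The \textbf{seq} case \textbf{(p)} uses \textbf{U7}: an update set $X$ yielded by $\textbf{seq}\,r_1\,r_2\,\textbf{endseq}$ is either an inconsistent update set of $r_1$ (which cannot witness $\mathrm{wcon}$) or is built from a consistent $Y_1$ of $r_1$ and a $Y_2$ of $r_2$ in $S+Y_1$; one then argues that the combined set $X$ is consistent iff $Y_2$ is, so $\mathrm{wcon}$ of the sequence reduces to $\exists X(\mathrm{con}(r_1,X) \wedge [X]\mathrm{wcon}(r_2))$, using Axiom \textbf{DY1} and the modal axioms \textbf{M1}--\textbf{M5} to move $\mathrm{wcon}(r_2)$ under $[X]$.

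The hard cases are \textbf{(l)} (\textbf{forall}) and \textbf{(m)} (\textbf{par}), where consistency of the combined update set is not equivalent to consistency of the components: one needs in addition that the components do not \emph{conflict} pairwise, which is exactly what $\mathrm{joinable}$ captures. For \textbf{(m)}, unfolding \textbf{U4} gives $X = Y_1 \cup Y_2$; then $\mathrm{conUSet}(Y_1 \cup Y_2)$ holds iff $\mathrm{conUSet}(Y_1)$, $\mathrm{conUSet}(Y_2)$ and there is no location updated to distinct values across $Y_1$ and $Y_2$ — the last conjunct, after existentially quantifying $Y_1,Y_2$, is precisely $\mathrm{joinable}(r_1,r_2)$ by \eqref{joinable}. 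The delicate point is the direction where from $\mathrm{wcon}(r_1) \wedge \mathrm{wcon}(r_2) \wedge \mathrm{joinable}(r_1,r_2)$ we must produce a \emph{single} pair of update sets that is simultaneously consistent and non-conflicting; here $\mathrm{joinable}$ already asserts existence of such a witnessing pair, and consistency of each component within that pair has to be re-derived, which is where one invokes that a sub-pair of a consistent set is consistent. For \textbf{(l)}, \textbf{U3} encodes the union $\Delta_{a_1} \cup \cdots \cup \Delta_{a_n}$ into the second-order variable $\mathcal{X}$, so $\mathrm{conUSet}$ of the union decomposes into: each $\Delta_{a_i}$ is consistent (giving the $\forall x(\varphi \rightarrow \mathrm{wcon}(r))$ conjunct) and every two of them are mutually non-conflicting (giving $\forall x\forall y(\varphi \wedge \varphi[y/x] \rightarrow \mathrm{joinable}(r,r[y/x]))$); the main obstacle is the careful bookkeeping needed to translate a statement about finitely many indexed update sets packed into one relation $\mathcal{X}$ into the pairwise-quantified form, and to justify the substitution $r[y/x]$ semantically, which should follow from the fact that $\Delta(r,S,\zeta[x\mapsto a])$ depends on $\zeta$ only through the value $a$. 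Throughout I would appeal to Lemma~\ref{lem-upd} to eliminate nested $\mathrm{upd}$ subformulae where convenient, and note that all second-order quantifier manipulations are licensed by the remark that the axioms and rules stated for the first individual sort are implicitly assumed for the predicate sorts as well.
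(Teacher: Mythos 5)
The paper itself omits the proof of this lemma (``lengthy but relatively easy exercises''), so there is no official argument to compare against; judged on its own merits, your strategy --- unfold $\mathrm{wcon}(r)$ as $\exists X(\mathrm{upd}(r,X)\wedge\mathrm{conUSet}(X))$, substitute the right-hand side of the matching axiom \textbf{U1}--\textbf{U7}, and push quantifiers through --- is the natural one, and your treatment of (i)--(k), the conditional case, (n), (o) and (p) is sound. In particular, for (p) your observation that $Y_1\oslash Y_2$ is consistent iff $Y_2$ is, given that $Y_1$ is consistent, is exactly the point that makes \textbf{U7} line up with the stated equivalence.

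The gap is in (m), and the same gap recurs in (l). You correctly isolate the delicate direction --- from $\mathrm{wcon}(r_1)\wedge\mathrm{wcon}(r_2)\wedge\mathrm{joinable}(r_1,r_2)$ one must produce a \emph{single} pair $(Y_1,Y_2)$ whose union is consistent --- but your proposed repair does not close it. The pair witnessing $\mathrm{joinable}(r_1,r_2)$ is only required by Equation~\ref{joinable} to be free of \emph{cross}-conflicts; nothing forces $Y_1$ or $Y_2$ to be internally consistent, so ``a sub-pair of a consistent set is consistent'' has no consistent set to start from, and conversely the witnesses of $\mathrm{wcon}(r_1)$ and $\mathrm{wcon}(r_2)$ need not be the joinable pair: the three existentials may be satisfied by three unrelated choices. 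In fact the right-to-left direction of (m) is not semantically valid for genuinely non-deterministic subrules. Let $r_1$ yield the update sets $\{(f,c,c)\}$ (consistent) and $\{(g,c,c),(g,c,d)\}$ (inconsistent, $c\neq d$), and let $r_2$ yield $\{(f,c,d)\}$ (consistent) and $\{(h,c,c),(h,c,d)\}$ (inconsistent); then $\mathrm{wcon}(r_1)$, $\mathrm{wcon}(r_2)$ and $\mathrm{joinable}(r_1,r_2)$ (witnessed by the first set of $r_1$ together with the second set of $r_2$, which share no location) all hold, yet every union of an $r_1$-set with an $r_2$-set is inconsistent, so $\mathrm{wcon}(\textbf{par}\,r_1\,r_2\,\textbf{endpar})$ fails; by Theorem~\ref{c5-theoremsoundness} no derivation can exist. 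Only the left-to-right implications of (l) and (m) are derivable in general. These equivalences are imported from the deterministic setting of \cite{RobertLogicASM}, where each rule yields a unique update set and all existential witnesses coincide; your argument for (m) and (l) goes through under that additional hypothesis (as the paper itself imposes for parts (c) and (d) of Lemma~\ref{lem-modalAxiomsASMs}), and you should state it explicitly rather than rely on the sub-pair argument.
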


We omit the proof of the previous lemma as well as the proof of the remaining lemmas in this section, since they are lengthy but relatively easy exercises.

\begin{lemma}\label{lem-soundness-rule}The following properties for the formula $[r]\varphi$ are derivable in $\cal{L}$.
\begin{description}
  \item[(q)] $[\textbf{if} , \varphi , \textbf{then} , r , \textbf{endif}]\psi \leftrightarrow (\varphi\wedge[r]\psi) \vee (\neg \varphi \wedge \psi)$
  \item[(r)] $[\textbf{choose} \, x \, \textbf{with} \, \varphi \, \textbf{do} \, r \, \textbf{enddo}]\psi \leftrightarrow \forall x (\varphi\rightarrow [r]\psi)$
  \item[(s)] $[\textbf{choose} \, \mathtt{x} \, \textbf{with} \, \varphi \, \textbf{do} \, r \, \textbf{enddo}]\psi \leftrightarrow \forall \mathtt{x} (\varphi\rightarrow [r]\psi)$\smallskip
\end{description}
\end{lemma}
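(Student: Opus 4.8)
The plan is to prove each of the three equivalences (q), (r), (s) by unfolding the abbreviation in Equation~\ref{ASM1} and the axioms for $\mathrm{upd}$ from Figure~\ref{Fig-AxiomsUpdateSets}, then pushing the modal operator $[X]$ through the propositional structure. For (q), I would start from $[\textbf{if}\,\varphi\,\textbf{then}\,r\,\textbf{endif}]\psi \equiv \forall X(\mathrm{upd}(\textbf{if}\,\varphi\,\textbf{then}\,r\,\textbf{endif},X)\rightarrow[X]\psi)$ and substitute Axiom~\textbf{U2}, so the antecedent becomes $(\varphi\wedge\mathrm{upd}(r,X))\vee(\neg\varphi\wedge\bigwedge_{f}\forall xy(\neg X(f,x,y)))$. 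Case-splitting on $\varphi$ using the propositional axioms \textbf{P1}--\textbf{P3} and \textbf{M3}: when $\varphi$ holds this reduces to $\forall X(\mathrm{upd}(r,X)\rightarrow[X]\psi)$, which is $[r]\psi$; when $\neg\varphi$ holds, the only $X$ satisfying the antecedent is the empty update set, and since the empty set is trivially consistent, Axioms~\textbf{M7}/\textbf{M8} (or more directly \textbf{A1} with no updates present) give $[X]\psi\leftrightarrow\psi$, yielding the disjunct $\neg\varphi\wedge\psi$. Assembling the two cases with classical reasoning gives the stated biconditional.

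For (r), I would unfold $[\textbf{choose}\,x\,\textbf{with}\,\varphi\,\textbf{do}\,r\,\textbf{enddo}]\psi$ as $\forall X(\mathrm{upd}(\textbf{choose}\ldots,X)\rightarrow[X]\psi)$ and apply Axiom~\textbf{U5}, turning the antecedent into $\exists x(\varphi\wedge\mathrm{upd}(r,X))$. The key manipulation is the first-order equivalence $\forall X(\exists x(\varphi\wedge\mathrm{upd}(r,X))\rightarrow[X]\psi) \leftrightarrow \forall X\forall x((\varphi\wedge\mathrm{upd}(r,X))\rightarrow[X]\psi)$, using that $x$ does not occur free in $[X]\psi$ after the bound-variable conventions are respected (and invoking \textbf{M6}/Barcan only if a quantifier needs to cross the modality, which here it does not since we are merely reorganising the quantifier prefix). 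Commuting the two universal quantifiers and re-contracting over $X$ on the inside gives $\forall x(\varphi\rightarrow\forall X(\mathrm{upd}(r,X)\rightarrow[X]\psi))$, i.e.\ $\forall x(\varphi\rightarrow[r]\psi)$. Property (s) is proved by the identical argument with Axiom~\textbf{U6} in place of \textbf{U5} and a variable $\mathtt{x}$ of the second individual sort; the only subtlety is that the instantiation rules \textbf{UI}/\textbf{EI} carry the side condition "$\varphi$ pure or $t$ static," so one works with the ranging variable directly rather than substituting an arbitrary term, which suffices here.

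The main obstacle I anticipate is bookkeeping around the modal operator and the quantifier rules rather than any deep idea: one must be careful that in (q) the $X$ witnessing $\neg\varphi$ really is forced to be the empty set by Axiom~\textbf{U2} (so that $S+X = S$ and $[X]\psi$ collapses to $\psi$ for \emph{arbitrary} $\psi$, not only static or pure $\psi$ — this uses that $X$ is consistent and contains no updates, so every location keeps its value via \textbf{A1}, hence $S+X$ is literally $S$), and that in (r) and (s) the bound choice variable genuinely does not escape its scope when the quantifier prefix is rearranged. All of this is routine given Lemma~\ref{lem-upd} and the axioms already established, so the proof is short; accordingly, as the authors indicate, the detailed verification is omitted as a straightforward exercise.
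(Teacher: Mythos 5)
Your proposal is correct, and since the paper explicitly omits this proof as a ``lengthy but relatively easy exercise,'' your unfolding of Equation~(\ref{ASM1}) together with Axioms~\textbf{U2}, \textbf{U5}, \textbf{U6} and the subsequent quantifier/modal bookkeeping is exactly the intended route. You also correctly flag the one genuinely delicate point, namely that in the $\neg\varphi$ case of (q) the collapse $[X]\psi\leftrightarrow\psi$ for the empty (hence consistent) update set must hold for arbitrary $\psi$ and is obtained by a structural induction using \textbf{A1} and the modal distribution equivalences rather than by \textbf{M7}/\textbf{M8} alone.
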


Lemma \ref{lem-soundness-composition} states that a parallel
composition is commutative and associative while a sequential
composition is associative.

\begin{lemma}\label{lem-soundness-composition}The following properties are derivable in $\cal{L}$.
\begin{description}
  \item[(t)] \textbf{par} $r_1\hspace{0.2cm} r_2$ \textbf{endpar} $\equiv$ \textbf{par} $r_2\hspace{0.2cm} r_1$ \textbf{endpar}
  \item[(u)] \textbf{par} (\textbf{par} $r_1\hspace{0.2cm} r_2$ \textbf{endpar}) $r_3$ \textbf{endpar} $\equiv$ \textbf{par} $r_1$ (\textbf{par} $r_2\hspace{0.2cm} r_3$
  \textbf{endpar}) \textbf{endpar}
  \item[(v)] \textbf{seq} (\textbf{seq} $r_1\hspace{0.2cm} r_2$ \textbf{endseq}) $r_3$ \textbf{endseq} $\equiv$ \textbf{seq} $r_1$ (\textbf{seq} $r_2\hspace{0.2cm} r_3$
  \textbf{endseq}) \textbf{endseq}
\end{description}
\end{lemma}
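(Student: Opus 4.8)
\textbf{Proof proposal for Lemma~\ref{lem-soundness-composition}.}

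The plan is to prove each of the three equivalences by unfolding the relevant axioms for the predicate $\mathrm{upd}(r,X)$ from Figure~\ref{Fig-AxiomsUpdateSets} and showing, in each case, that the two sides describe the same family of update sets. Recall that by Definition~\ref{def-equivalent-rules} it suffices to establish, for an arbitrary state $S$, that $S \models \forall X (\mathrm{upd}(r_1,X) \leftrightarrow \mathrm{upd}(r_2,X))$, where $r_1,r_2$ are the two rules being compared. For (t), I would apply Axiom~\textbf{U4} to both \textbf{par} $r_1\ r_2$ \textbf{endpar} and \textbf{par} $r_2\ r_1$ \textbf{endpar}: the first yields $\exists Y_1 Y_2 (\mathrm{upd}(r_1,Y_1) \wedge \mathrm{upd}(r_2,Y_2) \wedge \bigwedge_f \forall x y (X(f,x,y) \leftrightarrow (Y_1(f,x,y) \vee Y_2(f,x,y))))$ and the second the same with $Y_1,Y_2$ interchanged; since $\vee$ is commutative (a propositional tautology, provable from \textbf{P1}--\textbf{P3}), the two existential statements are interderivable, and then renaming the bound second-order variables $Y_1,Y_2$ finishes the argument. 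For (u), I would unfold \textbf{U4} twice on each side; the left side becomes a statement asserting the existence of $Y_{12}$ (the update set of the inner \textbf{par} $r_1\ r_2$) and $Y_3$ with $X = Y_{12} \cup Y_3$ and $Y_{12} = Y_1 \cup Y_2$, hence $X$ is characterised by $X(f,x,y) \leftrightarrow Y_1(f,x,y) \vee Y_2(f,x,y) \vee Y_3(f,x,y)$; the right side gives the analogous characterisation with the grouping $(Y_2 \vee Y_3)$ inside, and associativity of $\vee$ makes them coincide.

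For (v), the sequential case, I would likewise unfold Axiom~\textbf{U7} twice on each side. Here the description of an update set yielded by \textbf{seq} involves the override operator $\oslash$ (encoded by the biconditional $X(f,x,y) \leftrightarrow ((Y_1(f,x,y) \wedge \forall z (\neg Y_2(f,x,z))) \vee Y_2(f,x,y))$ in \textbf{U7}) together with the consistency predicate $\mathrm{con}$ and the modal operator $[Y_1]$. The key algebraic fact needed is that $\oslash$ is associative as an operation on update sets, i.e. $(\Delta_1 \oslash \Delta_2) \oslash \Delta_3 = \Delta_1 \oslash (\Delta_2 \oslash \Delta_3)$ — this is a routine but slightly tedious case analysis on which of $\Delta_1,\Delta_2,\Delta_3$ contains an update to a given location, with later update sets taking precedence. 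Beyond that one must track the consistency side-conditions: on the left, \textbf{U7} applied to the outer \textbf{seq} requires $\mathrm{con}(Y_{12})$ where $Y_{12}$ is the update set of the inner \textbf{seq} $r_1\ r_2$; one then has to verify that this decomposes correctly, using that $S + (\Delta_1 \oslash \Delta_2)$ equals $(S + \Delta_1) + \Delta_2$ whenever the relevant sets are consistent, so that the nested modal evaluation $[Y_1]\mathrm{upd}(\textbf{seq}\ r_2\ r_3,\cdot)$ on the right matches the structure on the left. The inconsistent branches of \textbf{U7} must be matched up separately: an inconsistent update set of \textbf{seq} $(\textbf{seq}\ r_1\ r_2)\ r_3$ arises either from an inconsistent update set of $r_1$, or from a consistent one of $r_1$ followed by an inconsistent one of $r_2$ in the successor state, or from consistent sets for $r_1,r_2$ followed by an inconsistent one of $r_3$ — and the same trichotomy, suitably reassociated, is what the right-hand side produces.

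I expect the main obstacle to be the bookkeeping in case (v): unlike $\vee$, the override operator $\oslash$ is not symmetric, so one must be careful about the order in which update sets are combined and about the precise interplay between the $\oslash$-characterisation, the consistency guards $\mathrm{con}(Y_1)$, and the nesting of the modal operators $[Y_1]$ and $[Y_2]$. In particular, the semantics of $[X]\varphi$ is only well-behaved (via Axioms~\textbf{A1}, \textbf{A2}, \textbf{M4}) under consistency, so the proof has to handle the consistent and inconsistent sub-cases of the intermediate update sets separately and check that the ``inconsistent'' disjuncts on the two sides of \textbf{U7} line up after reassociation. Cases (t) and (u) are comparatively immediate, reducing to commutativity and associativity of propositional disjunction together with renaming of bound second-order variables. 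Because of the length of these verifications, following the convention already adopted in this section we would state the lemma and relegate the full computation.
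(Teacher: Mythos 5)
The paper itself gives no proof of this lemma: it is one of the results the authors explicitly relegate as ``lengthy but relatively easy exercises,'' so there is no official argument to compare yours against. Your strategy is nonetheless the intended one and is essentially correct: by Definition~\ref{def-equivalent-rules} the claim reduces to the interderivability of the two $\mathrm{upd}$-characterisations, cases (t) and (u) follow from Axiom~\textbf{U4} together with commutativity and associativity of $\vee$ and renaming of bound predicate-sort variables, and case (v) reduces to associativity of $\oslash$ plus the matching of the consistency guards and the inconsistent disjuncts of Axiom~\textbf{U7}, all of which you identify accurately (including the facts that, for consistent $\Delta_1$, $\Delta_1\oslash\Delta_2$ is consistent iff $\Delta_2$ is, and that the locations of $\Delta_2\oslash\Delta_3$ are exactly those of $\Delta_2$ and $\Delta_3$ combined). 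One refinement: since the lemma asserts \emph{derivability} in $\mathfrak{R}$, the step in (v) where you appeal to the semantic identity $S+(\Delta_1\oslash\Delta_2)=(S+\Delta_1)+\Delta_2$ to align the nested modal contexts should instead be routed through the proof system itself --- this is precisely the job of Axiom~\textbf{DY1} (together with \textbf{U7}, \textbf{A1}, \textbf{A2} and \textbf{M4}), which lets you trade the single modality $[Y_{12}]$ for the nested $[Y_1][Y_2]$ without stepping outside the calculus. With that substitution your outline becomes a faithful derivation sketch.
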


\begin{lemma}The extensionality axiom for transition rules in the logic for ASMs is derivable in $\cal{L}$: $r_1\equiv r_2\rightarrow([r_1]\varphi\leftrightarrow [r_2]\varphi)$.
\end{lemma}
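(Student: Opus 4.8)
The plan is to unfold the definition of rule equivalence and reduce the claim to a statement purely about update sets, then apply the extensionality axiom \textbf{E} together with the definition \eqref{ASM1} of $[r]\varphi$. Recall that by Definition~\ref{def-equivalent-rules}, $r_1 \equiv r_2$ means $S \models \forall X(\mathrm{upd}(r_1,X) \leftrightarrow \mathrm{upd}(r_2,X))$ for every state $S$, and by \eqref{ASM1} we have $[r_i]\varphi \equiv \forall X(\mathrm{upd}(r_i,X) \rightarrow [X]\varphi)$. So the goal $r_1 \equiv r_2 \rightarrow ([r_1]\varphi \leftrightarrow [r_2]\varphi)$ expands to deriving, under the hypothesis $\forall X(\mathrm{upd}(r_1,X) \leftrightarrow \mathrm{upd}(r_2,X))$, the biconditional $\forall X(\mathrm{upd}(r_1,X) \rightarrow [X]\varphi) \leftrightarrow \forall X(\mathrm{upd}(r_2,X) \rightarrow [X]\varphi)$.

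First I would assume $r_1 \equiv r_2$ and, using the definition, obtain $\mathrm{upd}(r_1,X) \leftrightarrow \mathrm{upd}(r_2,X)$ for a fresh $X$ via Rule~\textbf{UI}. Next, from the left side $[r_1]\varphi$, i.e.\ $\forall X(\mathrm{upd}(r_1,X) \rightarrow [X]\varphi)$, I instantiate at the same $X$ to get $\mathrm{upd}(r_1,X) \rightarrow [X]\varphi$; combining with the equivalence $\mathrm{upd}(r_2,X) \rightarrow \mathrm{upd}(r_1,X)$ through propositional reasoning (Axioms \textbf{P1}–\textbf{P3}, Rule~\textbf{M3}) yields $\mathrm{upd}(r_2,X) \rightarrow [X]\varphi$. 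Since $X$ was arbitrary, Rule~\textbf{UG} gives $\forall X(\mathrm{upd}(r_2,X) \rightarrow [X]\varphi)$, which is $[r_2]\varphi$. The converse direction is symmetric, swapping the roles of $r_1$ and $r_2$. Discharging the hypothesis via the deduction theorem (available through \textbf{P1}–\textbf{P3} and \textbf{M3}) gives the desired implication. One could alternatively route the argument through Axiom~\textbf{E} directly, but the direct unfolding via \eqref{ASM1} is cleaner.

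I do not anticipate a genuine obstacle here: the statement is essentially a congruence property that follows mechanically once the abbreviations are unfolded. The only point requiring a small amount of care is the side condition on Rule~\textbf{UG} — it must be applied to the predicate-sort variable $X$, and one should note that the implicit variant of \textbf{UG} for predicate-sort variables (mentioned at the start of Subsection~\ref{sub:AxiomsRules}) is what is being used, with no staticness restriction needed since we quantify uniformly over all of $P_1$. This is why we explicitly omit the routine details, consistent with the treatment of the other derivations in this section.
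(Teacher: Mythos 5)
Your derivation is correct: unfolding $r_1\equiv r_2$ as $\forall X(\mathrm{upd}(r_1,X)\leftrightarrow\mathrm{upd}(r_2,X))$ and $[r_i]\varphi$ via the abbreviation \eqref{ASM1}, the claim reduces to a routine first-order congruence argument using \textbf{UI}, \textbf{UG} (in their implicit predicate-sort variants), and \textbf{P1}--\textbf{P3} with \textbf{M3}, exactly as you describe. The paper itself omits the proof, classifying it among the ``relatively easy exercises'' of Section~\ref{sec:Derivation}, and your argument is the natural intended one; your observation that the direct route via \eqref{ASM1} is cleaner than invoking Axiom~\textbf{E} is also apt, since \textbf{E} as stated only yields an existentially quantified biconditional.
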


%%%%%%%%%%%%%%%%%%%%%%%%%%%%%%%%%%%%%%%%%%%%%%%%%%%%%%%%%%%%%%%%%%%%%%%%%%%%%%%%%%%%%%%%%%%%%%%%%%%%%%%%%%%%%%%%%%%%%%%%%%%
\section{Completeness}\label{sec:completeness}

%In this section we investigate the completeness of the proof system.
%Since the non-deterministic parallel ASMs are a variation of hierarchical ASMs, which do not have
%recursive rule declarations, 
We can prove the completeness of ${\cal L}$ by using a similar strategy to that used in~\cite{RobertLogicASM}. That is, we can show that $\cal L$ is a definitional extension of a complete logic. However, the logic for hierarchical ASMs in~\cite{RobertLogicASM} is a definitional extension of first-order logic. In the case of the logic $\cal L$, the proof is more complicated since we have to deal with set membership predicates and corresponding predicate sorts. The key idea is to show instead that ${\cal L}$ is a
\emph{definitional extension} of first-order logic extended with two membership predicates with respect to finite sets, which in turns constitutes itself a complete logic.

In the remaining of this section, we will use ${\cal L}^\in$ to denote the logic obtained by restricting the formulae of ${\cal L}$ to those produced by the following grammar:\\[0.1cm]
$\varphi, \psi \, ::= \, s=t \mid s_a = t_a \mid  \neg\varphi \mid \varphi \wedge \psi \mid \forall x (\varphi) \mid \forall \texttt{x} (\varphi) \mid \forall x^1 (\varphi) \mid \forall x^2 (\varphi) \mid $\\[0.1cm]
\hspace*{1.4cm}$\in^1\!\!(x^1\!,f,t_0,s_0) \mid \in^2\!\!(x^2\!,f,t_0,s_0,s).$\\

Let us define the theory of ${\cal L}^\in$ as the theory obtained by taking the union of a sound and complete axiomatisation of first-order logic and the sound and complete axiomatisation of the properties of finite sets introduced in \cite{agotness:ljigpl2008}. Clearly, such theory of ${\cal L}^\in$ is a conservative extension of the first-order theory, in the sense that if $\Phi$ is a set of pure first-order formulae and $\varphi$ is a pure first-order formula (not containing subformulae of the form $\in^n\!\!(x^n, t_1, \ldots, t_n)$) and $\Phi \vdash \varphi$ holds in the theory of ${\cal L}^\in$, then there already exists a derivation using the axiomatisation for first-order logic. Indeed, due to the soundness of the axioms and rules in the theory of ${\cal L}^\in$, we obtain $\Phi \models \varphi$, which is a pure statement about models for first-order logic. Thus the known completeness for first-order logic gives $\Phi \vdash \varphi$ in an axiomatisation for first-order logic, hence the claimed conservativism of the extension. Since then the theory of ${\cal L}^\in$ proves no new theorems about first-order logic, all the new theorems belong to the theory of properties of finite sets and thus can be derived by using the axiomatisation in \cite{agotness:ljigpl2008} (which also form part of the axiomatisation of ${\cal L}^\in$), we get the following key result.

\begin{theorem}\label{CompletenessLin}
Let $\varphi$ be a formula and $\Phi$ be a set of formulae in the language of ${\cal L}^\in$ (all of the same vocabulary). If $\Phi\models\varphi$, then $\Phi\vdash\varphi$.
\end{theorem}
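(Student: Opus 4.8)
The plan is to observe that ${\cal L}^\in$ is, structurally, nothing more than a many-sorted first-order logic with equality: its only non-classical ingredients are the two membership predicates $\in^1$ and $\in^2$, and these are interpreted over the predicate sorts, which in every ${\cal L}^\in$-structure are fixed to be the full universes of finite subsets of $\mathcal{F}_{dyn} \times (D_1 \cup D_2) \times (D_1 \cup D_2)$ and of $\mathcal{F}_{dyn} \times (D_1 \cup D_2) \times (D_1 \cup D_2) \times D_1$, with $\in^n$ denoting genuine membership. Hence the intended class of ${\cal L}^\in$-structures is exactly the class of metafinite structures extended by these two layers of finite sets, and the theory of ${\cal L}^\in$ was defined precisely as the union of a sound and complete axiomatisation of many-sorted first-order logic and the sound and complete axiomatisation of the properties of such finite-set layers from \cite{agotness:ljigpl2008}. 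The goal is then to show that this combined presentation is complete for that class.

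First I would record soundness: every first-order axiom and rule is valid over arbitrary many-sorted structures, hence in particular over ${\cal L}^\in$-structures, and every axiom from \cite{agotness:ljigpl2008} is valid in ${\cal L}^\in$-structures because there the predicate sorts and $\in^n$ carry exactly their intended finite-set meaning; this is just the restriction of Theorem~\ref{c5-theoremsoundness} to the fragment. For completeness I would argue contrapositively. Suppose $\Phi \not\vdash \varphi$; then $\Phi \cup \{\neg\varphi\}$ is consistent in the theory of ${\cal L}^\in$. The crucial observation, spelled out just before the statement, is that this theory is a conservative extension of pure first-order logic: by soundness together with G\"odel completeness of first-order logic no new purely first-order theorems are added, so every genuinely new derivable formula concerns the finite-set layers and is therefore already captured by the complete axiomatisation of \cite{agotness:ljigpl2008}, which axiomatises finite sets over an arbitrary first-order base. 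Consequently the theory of ${\cal L}^\in$ is complete for the class of ${\cal L}^\in$-structures, the consistent set $\Phi \cup \{\neg\varphi\}$ has an ${\cal L}^\in$-model, and this contradicts $\Phi \models \varphi$; hence $\Phi \vdash \varphi$.

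I expect the only genuinely delicate point to be the coherence of gluing the two complete axiomatisations: a priori, combining a complete axiomatisation of first-order logic with a complete axiomatisation of finite sets need not be complete for the combined class, since a canonical model built from the union might interpret the predicate sorts by nonstandard ``sets''. This is exactly what the completeness --- not merely soundness --- of the axiomatisation of \cite{agotness:ljigpl2008} rules out: any consistent theory extending it has a model whose predicate sorts are honest finite-subset universes, and the conservativity argument guarantees the first-order part can be realised alongside. Making this fully precise --- isolating that the predicate sorts interact with the rest of the structure only through $\in^1$ and $\in^2$, which \cite{agotness:ljigpl2008} completely governs, and then running the corresponding Henkin/canonical-model construction --- is the main obstacle; the remaining many-sortedness (four sorts, equality restricted to the two individual sorts, $\in^n$ restricted to its intended argument types) is routine bookkeeping on top of the standard many-sorted completeness theorem.
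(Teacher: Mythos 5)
Your proposal takes essentially the same route as the paper: both define the theory of ${\cal L}^\in$ as the union of a sound and complete axiomatisation of (many-sorted) first-order logic with the complete axiomatisation of finite sets from \cite{agotness:ljigpl2008}, and both obtain completeness from soundness, conservativity over the pure first-order fragment, and the completeness of the finite-set component. The ``delicate point'' you flag --- that the union of two complete axiomatisations is not automatically complete for the combined class of structures, so one must check that the predicate sorts interact with the rest only through $\in^1$ and $\in^2$ and run a canonical-model construction --- is a fair observation, and the paper's own one-paragraph justification leaves exactly this step at the same level of informality as you do.
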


Finally, we need to show that all the formulae in ${\cal L}$ which are not formulae of ${\cal L}^\in$ can be translated into formulae of ${\cal L}^\in$ based on derivable equivalences in the theory of ${\cal L}$.
First, we reduce the general atomic formulae in ${\cal L}$ to atomic formulae of the form $x=y$, $\mathtt{x} = \mathtt{y}$, $f(x)=y$, $f(x)=\mathtt{y}$, $f(\mathtt{x})=\mathtt{y}$, $\in^1\!\!(x^1\!,f,x,y)$, $\in^1\!\!(x^1\!,f,x,\mathtt{y})$, $\in^1\!\!(x^1\!,f,\mathtt{x},\mathtt{y})$, $\in^2\!\!(x^2\!,f,x,y, z)$, $\in^2\!\!(x^2\!,f,x,\mathtt{y},z)$ and $\in^2\!\!(x^2\!,f,\mathtt{x},\mathtt{y}, z)$.  Let $t$, $s$ and $s'$ denote point terms and let $t_a$ and $s_a$ denote algorithmic terms. This can be done by using the following equivalences.
\begin{align*}
s=t &\leftrightarrow \exists x (s = x \wedge x = t)\\
s_a = t_a &\leftrightarrow \exists \mathtt{x} (s_a = \mathtt{x} \wedge \mathtt{x} = t_a)\\
f(s)=y &\leftrightarrow \exists x (s = x \wedge f(x) = y)\\
f(s)=\mathtt{y} &\leftrightarrow \exists x (s = x \wedge f(x) = \mathtt{y})\\
f(s_a)=\mathtt{y} &\leftrightarrow \exists \mathtt{x} (s_a = \mathtt{x} \wedge f(\mathtt{x}) = \mathtt{y})\\
\in^1\!\!(x^1\!,f,t,s) &\leftrightarrow \exists x y (t = x \wedge s = y \wedge \in^1\!\!(x^1\!,f,x,y))\\
\in^1\!\!(x^1\!,f,t,s_a) &\leftrightarrow \exists x \mathtt{y} (t = x \wedge s_a = \mathtt{y} \wedge \in^1\!\!(x^1\!,f,x,\mathtt{y}))\\
\in^1\!\!(x^1\!,f,t_a,s_a) &\leftrightarrow \exists \mathtt{x} \mathtt{y} (t_a = \mathtt{x} \wedge s_a = \mathtt{y} \wedge \in^1\!\!(x^1\!,f,\mathtt{x},\mathtt{y}))\\
\in^2\!\!(x^2\!,f,t,s,s') &\leftrightarrow \exists x y z(t = x \wedge s = y \wedge s' = z \wedge \in^2\!\!(x^2\!,f,x,y,z))\\
\in^2\!\!(x^2\!,f,t,s_a,s') &\leftrightarrow \exists x \mathtt{y} z (t = x \wedge s_a = \mathtt{y} \wedge s' = z \wedge \in^2\!\!(x^2\!,f,x,\mathtt{y},z))\\
\in^2\!\!(x^2\!,f,t_a,s_a,s') &\leftrightarrow \exists \mathtt{x} \mathtt{y} z (t_a = \mathtt{x} \wedge s_a = \mathtt{y} \wedge s' = z \wedge \in^2\!\!(x^2\!,f,\mathtt{x},\mathtt{y},z))
\end{align*}
The translation of modal formulae into ${\cal L}^\in$ distributes over negation, Boolean connectives and quantifiers. We eliminate atomic formulae of the form $\mathrm{upd}(r, x^1)$ using Axioms~\textbf{U1}-\textbf{U7}, and the modal operator in formulae of the form $[x^1] \varphi$, where $\varphi$ is already translated to ${\cal L}^\in$, using the following derivable equivalences. \\[0.2cm]
$[x^1]x=y \leftrightarrow(\text{conUSet}(x^1)\rightarrow x=y)$; \hfill $[x^1]\mathtt{x}=\mathtt{y} \leftrightarrow(\text{conUSet}(x^1)\rightarrow \mathtt{x}=\mathtt{y})$; \\
$[x^1]f(x)=y \leftrightarrow (\text{conUSet}(x^1)\rightarrow \, \in^1\!\!(x^1\!, f, x, y)\vee(\forall z (\neg\! \in^1\!\!(x^1\!, f, x, z)) \wedge f(x) = y))$\\
$[x^1]f(x)=\mathtt{y} \leftrightarrow (\text{conUSet}(x^1)\rightarrow \, \in^1\!\!(x^1\!, f, x, \mathtt{y})\vee(\forall \mathtt{z} (\neg\! \in^1\!\!(x^1\!, f, x, \mathtt{z})) \wedge f(x) = \mathtt{y}))$\\
$[x^1]f(\mathtt{x})=\mathtt{y} \leftrightarrow (\text{conUSet}(x^1)\rightarrow \, \in^1\!\!(x^1\!, f, \mathtt{x}, \mathtt{y})\vee(\forall \mathtt{z} (\neg\! \in^1\!\!(x^1\!, f, \mathtt{x}, \mathtt{z})) \wedge f(\mathtt{x}) = \mathtt{y}))$;\\
$[x^1]\!\! \in^1\!\!(x^1\!, f, x, y) \leftrightarrow(\text{conUSet}(x^1)\rightarrow \, \in^1\!\!(x^1\!, f, x, y))$; \\
$[x^1]\!\! \in^1\!\!(x^1\!, f, x, \mathtt{y}) \leftrightarrow(\text{conUSet}(x^1)\rightarrow \, \in^1\!\!(x^1\!, f, x, \mathtt{y}))$; \\
$[x^1]\!\! \in^1\!\!(x^1\!, f, \mathtt{x}, \mathtt{y}) \leftrightarrow(\text{conUSet}(x^1)\rightarrow \, \in^1\!\!(x^1\!, f, \mathtt{x}, \mathtt{y}))$; \\
$[x^1]\!\! \in^2\!\!(x^2\!, f, x, y, z) \leftrightarrow(\text{conUSet}(x^1)\rightarrow \, \in^2\!\!(x^2\!, f, x, y, z))$; \\
$[x^1]\!\! \in^2\!\!(x^2\!, f, x, \mathtt{y}, z) \leftrightarrow(\text{conUSet}(x^1)\rightarrow \, \in^2\!\!(x^2\!, f, x, \mathtt{y},z))$; \\
$[x^1]\!\! \in^2\!\!(x^2\!, f, \mathtt{x}, \mathtt{y}, z) \leftrightarrow(\text{conUSet}(x^1)\rightarrow \, \in^2\!\!(x^2\!, f, \mathtt{x}, \mathtt{y},z))$; \\
$[x^1]\neg\varphi \leftrightarrow(\text{conUSet}(x^1)\rightarrow \neg[x^1]\varphi)$; \hspace{1.175cm} $[x^1](\varphi\wedge\psi) \leftrightarrow([x^1]\varphi\wedge[x^1]\psi)$; \\
$[x^1]\forall x (\varphi) \leftrightarrow \forall x ([x^1]\varphi)$; \hspace{3.03cm} $[x^1]\forall \mathtt{x} (\varphi) \leftrightarrow \forall \mathtt{x} ([x^1]\varphi)$; \\
$[x^1]\forall y^1 (\varphi) \leftrightarrow \forall y^1 ([x^1]\varphi)$; \hspace{2.75cm} $[x^1]\forall x^2 (\varphi) \leftrightarrow \forall x^2 ([x^1]\varphi)$.\\

Our main technical result then follows from Theorem~\ref{CompletenessLin} and the fact that the described translation from formulae $\varphi$ of ${\cal L}$ to formulae $\varphi'$ of ${\cal L}^\in$ satisfies the properties required for ${\cal L}$ to be a definitional extension of ${\cal L}^\in$, i.e., (a) $\varphi\leftrightarrow \varphi'$ is derivable in ${\cal L}$ and (b) $\varphi'$ is derivable in ${\cal L}^\in$ whenever $\varphi$ is derivable ${\cal L}$. 

\begin{theorem} \label{thm-adtm-completeness-FOL}

Let $\varphi$ be a formula and $\Phi$ a set of formulae in the language of ${\cal L}$ (all of the same vocabulary). If $\Phi\models\varphi$, then $\Phi\vdash\varphi$.

\end{theorem}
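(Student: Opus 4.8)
The plan is to establish Theorem~\ref{thm-adtm-completeness-FOL} by reducing it to the completeness of ${\cal L}^\in$ (Theorem~\ref{CompletenessLin}) via the standard definitional-extension argument. Concretely, I would show that the translation $\varphi \mapsto \varphi'$ sketched above sends every formula of ${\cal L}$ to a formula of ${\cal L}^\in$ in such a way that (a) ${\cal L} \vdash \varphi \leftrightarrow \varphi'$ and (b) if ${\cal L} \vdash \varphi$ then ${\cal L}^\in \vdash \varphi'$. Granting (a) and (b), the theorem follows by a short chain of implications: suppose $\Phi \models \varphi$ in ${\cal L}$; by soundness of the translation (part (a) together with Theorem~\ref{c5-theoremsoundness}), $\Phi' \models \varphi'$ where $\Phi'$ is the translated set, and this is now a semantic statement purely about ${\cal L}^\in$-structures; by Theorem~\ref{CompletenessLin}, $\Phi' \vdash \varphi'$ in ${\cal L}^\in$, hence in ${\cal L}$; finally, using (a) in the form ${\cal L} \vdash \varphi' \leftrightarrow \varphi$ together with ${\cal L} \vdash \psi \leftrightarrow \psi'$ for each $\psi \in \Phi$, we recover $\Phi \vdash \varphi$.

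The bulk of the work is verifying (a), i.e.\ that the translation terminates and that each rewrite step is a derivable equivalence in the theory of ${\cal L}$. First I would fix a well-founded measure on ${\cal L}$-formulae — e.g.\ the multiset of nesting depths of $\mathrm{upd}(r,X)$ and $[X](\cdot)$ subformulae, refined by the structural complexity of the rules $r$ occurring inside $\mathrm{upd}$ — and argue that the eliminations described (reducing general atomic formulae to the listed normal forms, eliminating $\mathrm{upd}(r,X)$ via Axioms~\textbf{U1}--\textbf{U7}, and pushing $[X]$ inward using the displayed equivalences and Lemma~\ref{lem-upd}) all strictly decrease this measure. The atomic-term reductions are immediate first-order manipulations. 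For the modal-distribution equivalences: $[x^1](\varphi\wedge\psi)\leftrightarrow[x^1]\varphi\wedge[x^1]\psi$ follows from property (h) of Lemma~\ref{lem-soundness-modaloperator} together with Axiom~\textbf{M1}; the quantifier cases follow from Axiom~\textbf{M6} (Barcan) in one direction and Axiom~\textbf{M5} plus property (g) in the other; $[x^1]\neg\varphi\leftrightarrow(\mathrm{conUSet}(x^1)\rightarrow\neg[x^1]\varphi)$ is a consequence of Axioms~\textbf{M4} and~\textbf{M5}; the clauses for $f(x)=y$ come from properties (e) of Lemma~\ref{lem-soundness-modaloperator} and Axioms~\textbf{A1},~\textbf{A2}; and the clauses where $[x^1]$ is applied to a static/pure atom (equalities and the membership predicates, which do not mention dynamic functions) come from Axioms~\textbf{M7} and~\textbf{M8} in the conditional form $\mathrm{conUSet}(x^1)\rightarrow(\cdots)$. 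The elimination of $\mathrm{upd}(r,X)$ is exactly the content of Lemma~\ref{lem-upd}, invoking Axioms~\textbf{U1}--\textbf{U7}; here one must note that the right-hand sides of \textbf{U3} and \textbf{U7} reintroduce $\mathrm{upd}$ and $[\cdot]$ subformulae, but always with structurally simpler rules, so the measure still decreases — this bookkeeping is the technically most delicate point and I would present it carefully, possibly by an induction on rule structure nested inside the induction on the measure.

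For part (b) I would argue that (a) is actually the only thing needed beyond Theorem~\ref{CompletenessLin}: since every axiom and rule of ${\cal L}$ is, via (a), provably equivalent in the theory of ${\cal L}$ to its translation, and the translations of the ${\cal L}^\in$-fragment axioms and rules are derivable in ${\cal L}^\in$ by construction, any ${\cal L}$-derivation of $\varphi$ can be mirrored step by step by an ${\cal L}^\in$-derivation of $\varphi'$; the non-${\cal L}^\in$ axioms (the \textbf{U}-axioms, the modal \textbf{M}- and \textbf{A}-axioms, \textbf{DY1}, \textbf{E}) disappear precisely because their job was to justify the rewrite steps, which are now absorbed into the translation itself. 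I expect the main obstacle to be making the termination-of-translation argument fully rigorous in the presence of the $[X]$ operator and the mutually-recursive shape of Axioms~\textbf{U3} and~\textbf{U7}: one has to be sure that pushing a modality through a translated $\mathrm{upd}$-free formula, and then re-eliminating any freshly generated $\mathrm{upd}$ atoms, does not loop. A clean way to handle this is to translate in two strictly ordered phases — first eliminate all $\mathrm{upd}(r,X)$ outside the scope of any modality by recursion on rule structure, then eliminate modalities innermost-first using the displayed list, re-running phase one on any $\mathrm{upd}$ atoms produced inside a $[Y_1](\cdots)$ by \textbf{U7}, with the rule in that inner $\mathrm{upd}$ strictly smaller — and to record this ordering explicitly as the well-founded measure. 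Everything else is routine, so I would keep those verifications terse and refer to the soundness Theorem~\ref{c5-theoremsoundness} and Lemmas~\ref{lem-upd}, \ref{lem-soundness-modaloperator} for the individual equivalences.
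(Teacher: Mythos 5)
Your proposal follows essentially the same route as the paper: reduce to the completeness of ${\cal L}^\in$ (Theorem~\ref{CompletenessLin}) by showing that the displayed translation makes ${\cal L}$ a definitional extension of ${\cal L}^\in$, with conditions (a) and (b) stated exactly as in the paper. The paper leaves the termination and derivability bookkeeping implicit, so your explicit well-founded measure and your handling of the recursion through Axioms~\textbf{U3} and~\textbf{U7} merely elaborate the same argument rather than departing from it.
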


\section{Conclusion}\label{conclusions}

Non-deterministic transitions manifest themselves as a difficult task in the logical formalisation for ASMs. Indeed, Nanchen and St\"ark analysed potential problems to several approaches they tried by taking non-determinism into consideration and concluded \cite{RobertLogicASM}:
\begin{quote}
Unfortunately, the formalisation of consistency cannot be applied directly to non-deterministic ASMs. The formula Con$(r)$ (as defined in Sect. 8.1.2 of \cite{boerger:2003}) expresses the property that the \emph{union of all possible} update sets of (an ASM rule) $r$ in a given state is consistent. This is clearly not what is meant by consistency. Therefore, in a logic for ASMs with \textbf{choose} one had to add Con$(r)$ as an atomic formula to the logic.
\end{quote}

However, we observe that this conclusion is not necessarily true, as finite update sets can be made explicit in the formulae of a logic to capture non-deterministic transitions. In doing so, the formalisation of consistency defined in \cite{RobertLogicASM} can still be applied to such an explicitly specified update set $\Delta$ yielded by a rule $r$ in the form of the formula $\mathrm{con}(r,\Delta)$ as discussed in Subsection~\ref{sub:Consistency}. We thus solve this problem by the addition of the modal operator $[\Delta]$ for an update set generated by a non-deterministic parallel ASM rule. The approach works well, because in the parallel ASMs the number of possible parallel branches, although unbounded, is still finite. Therefore the update sets produced by these machines are restricted to be finite as well. This is implicitly assumed in the parallel ASM thesis of Blass and Gurevich\cite{blass:tocl2003,GurevichParallelCorrection08} and it is made explicit in the new parallel ASM thesis that we propose in \cite{FerrarottiSTW15}.

The proof systems that we develop in this work for the proposed logic for non-deterministic parallel ASMs, extends the proof system developed in \cite{RobertLogicASM} in two different ways. First, an ASM rule may be associated with a set of different update sets. Applying different update sets may lead to a set of different successor states to the current state. As the logic for non-deterministic parallel ASMs includes formulae denoting explicit update sets and variables that are bounded to update sets, our proof system allows us to reason about the interpretation of a formula over all successor states or over some successor state after applying an ASM rule over the current state. Secondly, in addition to capturing the consistency of an update set yielded by an ASM rule, our proof system also develops two notions of consistency (weak and strong consistency) w.r.t. a given rule. When the rule is deterministic, these two notions coincide.

%This article formalizes a one-step logic which captures statements abut the main rule of an ASM (where the ASM iterates the rule), investigates derivation rules for the logic, and proves soundness and completeness. The state of the art in this area was represented by the logic for ASMs introduced by Nanchen and St\"ark in~\cite{RobertLogicASM}. We go beyond this state of the art by incorporating the handling of bounded as well as unbounded non-deterministic ASM rules into our logical formalism.  
%Thus, we establish a sound and complete proof system for the logic for ASMs, which can be turned into a tool for reasoning about a considerably broader class of algorithms than previously. 

We plan as future work to embed our one-step logic into a complex dynamic logic and demonstrate how desirable properties of ASM runs can be formalised in such a logic. Of course, there is no chance of obtaining a complete proof theory for full ASM runs, but there is clearly  many potential practical benefits from the perspective of the ASM method for systems development~\cite{boerger:2003}.

\bibliographystyle{splncs03}

\bibliography{DBTsLogic}

\begin{thebibliography}{10}
\providecommand{\url}[1]{\texttt{#1}}
\providecommand{\urlprefix}{URL }

\bibitem{agotness:ljigpl2008}
{\AA}gotnes, T., Walicki, M.: Complete axiomatisations of properties of finite
  sets. Logic Journal of the IGPL  16(3),  293--313 (2008)

\bibitem{Alur2015}
Alur, R.: Principles of Cyber-Physical Systems. MIT Press (2015)

\bibitem{blass:tocl2003}
Blass, A., Gurevich, Y.: Abstract state machines capture parallel algorithms.
  ACM Trans. on Comp. Logic  4(4),  578--651 (October 2003)

\bibitem{GurevichParallelCorrection08}
Blass, A., Gurevich, Y.: Abstract state machines capture parallel algorithms:
  Correction and extension. ACM Trans. on Comp. Logic  9(3),  1--32 (06 2008)

\bibitem{boerger:2003}
B{\"o}rger, E., St{\"a}rk, R.F.: Abstract State Machines: A Method for
  High-Level System Design and Analysis. Springer-Verlag New York, Inc. (2003)

\bibitem{FerrarottiSTW15}
Ferrarotti, F., Schewe, K., Tec, L., Wang, Q.: A new thesis concerning
  synchronised parallel computing - simplified parallel {ASM} thesis. CoRR
  abs/1504.06203 (2015), \url{http://arxiv.org/abs/1504.06203}

\bibitem{Floyd67}
Floyd, R.W.: Nondeterministic algorithms. J. ACM  14(4),  636--644 (Oct 1967),
  \url{http://doi.acm.org/10.1145/321420.321422}

\bibitem{graedel:infcomp1998}
Gr{\"a}del, E., Gurevich, Y.: {Metafinite model theory}. Information and
  Computation  140(1),  26--81 (1998)

\bibitem{GroenboomFLEA95}
Groenboom, R., {Renardel de Lavalette}, G.: A formalization of evolving
  algebras. In: Proceedings of Accolade95. Dutch Research School in Logic
  (1995)

\bibitem{Huggins02}
Huggins, J.K., Wallace, C.: An abstract state machine primer. Tech. Rep. 02-04,
  Computer Science Department, Michigan Technological University (2002)

\bibitem{hughes:modallogic1996}
Hughes, G., Cresswell, M.: {A new introduction to modal logic}. Burns \& Oates
  (1996)

\bibitem{Kruskal56}
Kruskal, J.B.: On the shortest spanning subtree of a graph and the travelling
  salesman problem. Proc. Amer. Math. Soc.  2,  48--50 (1956)

\bibitem{RobertLogicASM}
St\"ark, R., Nanchen, S.: A logic for abstract state machines. Journal of
  Universal Computer Science  7(11) (2001)

\bibitem{Vapnik95}
Vapnik, V.N.: The Nature of Statistical Learning Theory. Springer-Verlag New
  York, Inc., New York, NY, USA (1995)

\end{thebibliography}

\end{document}